\title{Fast Approximation of Similarity Graphs \\ with Kernel Density Estimation}
\author{%
  Peter Macgregor\\
  School of Informatics\\
  University of Edinburgh\\
  United Kingdom
  % examples of more authors
   \And
   He Sun \\
   School of Informatics\\
   University of Edinburgh\\
   United Kingdom
}
\tikzstyle{basic}=[fill=white, draw=black, shape=circle]
\tikzstyle{square}=[fill=white, draw=black, shape=rectangle]
\tikzstyle{big dashed}=[fill=white, draw=black, shape=circle, minimum width=1cm, dashed]
\tikzstyle{vertical ellipse dashed}=[fill=none, draw=blue, minimum width=0.75cm, minimum height=3cm, ellipse, dashed, tikzit shape=rectangle, tikzit draw=blue, tikzit fill=white]
\tikzstyle{small vertical ellipse dashed}=[fill=none, draw=blue, shape=circle, tikzit fill=white, tikzit draw=blue, dashed, minimum width=0.75cm, minimum height=1.5cm, tikzit shape=rectangle, ellipse]
\tikzstyle{tiny vertical ellipse dashed}=[fill=none, draw=blue, shape=circle, tikzit fill=white, ellipse, dashed, minimum width=0.75cm, minimum height=1cm, tikzit shape=rectangle]
\tikzstyle{red}=[fill=red, draw=black, shape=circle]
\tikzstyle{green}=[fill={rgb,255: red,0; green,128; blue,128}, draw=black, shape=circle]
\tikzstyle{blue}=[fill=blue, draw=black, shape=circle]
\tikzstyle{huge dashed}=[fill=white, draw=black, shape=circle, dashed, minimum width=2cm]
\tikzstyle{medium}=[fill=white, draw=black, shape=circle, minimum width=1cm]
\tikzstyle{pale green}=[fill={rgb,255: red,173; green,231; blue,0}, draw=black, shape=circle, minimum width=1cm]
\tikzstyle{horizontal ellipse dashed}=[fill=white, draw=black, tikzit draw=magenta, tikzit shape=rectangle, minimum width=3cm, minimum height=0.75cm, ellipse, dashed]
\tikzstyle{minsize}=[fill=white, draw=black, shape=circle, minimum width=0.75cm]
\tikzstyle{horizontal ellipse green}=[fill={rgb,255: red,191; green,255; blue,0}, draw=black, tikzit draw={rgb,255: red,191; green,255; blue,0}, tikzit shape=rectangle, minimum width=3cm, minimum height=0.75cm, ellipse, dashed]
\tikzstyle{horizontal ellipse blue}=[fill={rgb,255: red,107; green,203; blue,255}, draw=black, tikzit draw=blue, tikzit shape=rectangle, minimum width=3cm, minimum height=0.75cm, ellipse, dashed]
\tikzstyle{smallblack}=[fill=black, draw=black, shape=circle, inner sep=0 pt, minimum size=3 pt]
\tikzstyle{smallSquare}=[fill=white, draw=black, shape=rectangle, inner sep=0 pt, minimum size=6 pt]
\tikzstyle{smallCircle}=[fill=white, draw=black, shape=circle, inner sep=0 pt, minimum size=6 pt]
\tikzstyle{big vertical ellipse dashed}=[fill=none, draw=blue, shape=circle, tikzit shape=rectangle, ellipse, dashed, minimum width=0.95cm, minimum height=3.7cm]
\tikzstyle{smallred}=[fill=red, draw=red, shape=circle, inner sep=0 pt, minimum size=3 pt]
\tikzstyle{smallblue}=[fill=blue, draw=blue, shape=circle, inner sep=0pt, minimum size=3pt]
\tikzstyle{directed}=[->]
\tikzstyle{undirected}=[-, line width=1pt]
\tikzstyle{directed red}=[draw=red, ->, line width=1pt]
\tikzstyle{directed green}=[draw={rgb,255: red,0; green,128; blue,128}, ->, line width=1pt]
\tikzstyle{directed blue}=[draw=blue, ->, line width=1pt]
\tikzstyle{directed purple}=[draw={rgb,255: red,128; green,0; blue,128}, ->, line width=1pt]
\tikzstyle{undirected red}=[-, draw=red, line width=1pt]
\tikzstyle{undirected green}=[-, draw={rgb,255: red,0; green,107; blue,61}, line width=1pt]
\tikzstyle{undirected blue}=[-, draw=blue, line width=1pt]
\tikzstyle{undirected purple}=[-, draw={rgb,255: red,128; green,0; blue,128}, line width=1pt]
\tikzstyle{undirected dashed}=[-, line width=1pt, dashed]
\tikzstyle{orange dashed}=[-, draw={rgb,255: red,255; green,128; blue,0}, dashed, line width=1.5pt]
\tikzstyle{directed dash}=[->, dashed]
\tikzstyle{blue dashed}=[-, draw=blue, dashed, line width=1pt]
\tikzstyle{green dashed}=[-, draw={rgb,255: red,0; green,162; blue,0}, dashed, line width=1pt]
\tikzstyle{blue filled}=[-, fill={blue!20}, draw=blue, line width=1pt, opacity=0.5, tikzit fill=white]
\tikzstyle{red filled}=[-, fill={red!20}, line width=1pt, draw=red, opacity=0.5, tikzit fill=white]
\tikzstyle{green filled}=[-, line width=1pt, draw={rgb,255: red,0; green,107; blue,61}, opacity=0.5, tikzit fill=white, fill={rgb,255: red,149; green,255; blue,179}]
\tikzstyle{orange filled}=[-, fill={orange!20}, draw=orange, line width=1pt, opacity=0.5, tikzit fill=white]
\tikzstyle{undirected dashed}=[-, draw=black, dashed, line width=1pt]
\theoremstyle{plain}
\newtheorem{theorem}{Theorem}
\newtheorem{lemma}{Lemma}
\newtheorem{corollary}{Corollary}
\theoremstyle{definition}
\newtheorem{definition}{Definition}
\theoremstyle{remark}
\newcommand{\vol}{\mathrm{vol}}
\newcommand{\G}{\mathsf{G}}
\newcommand{\K}{\mathsf{K}}
\newcommand{\fsgalg}{\textsc{FastSimilarityGraph}}
\newcommand{\samplealg}{\textsc{Sample}}
\newcommand{\twopartdefow}[3]
{
	\left\{
		\begin{array}{ll}
			#1 & \mbox{if } #2 \\
			#3 & \mbox{otherwise.}
		\end{array}
	\right.
}
\newcommand{\graphh}{\mathsf{H}}
\newcommand{\graphk}{\mathsf{K}}
\newcommand{\graphg}{\mathsf{G}}
\begin{document}

\maketitle

\begin{abstract}
  Constructing a similarity graph from a set $X$ of  data points in $ \mathbb{R}^d$ is   the first step of many modern clustering algorithms.
However, typical constructions of a similarity graph have high time complexity, and a quadratic space dependency with respect to $|X|$.
 We address this limitation and present a new algorithmic  framework that constructs a sparse approximation of the fully connected similarity graph while preserving its
cluster structure.
Our presented algorithm is based on the kernel density estimation problem, and is applicable for   arbitrary kernel functions.
We compare our designed algorithm with the   well-known  implementations from the \texttt{scikit-learn} library  and the \texttt{FAISS} library,  and
 find that our method significantly outperforms the implementation from both libraries on a variety of datasets.
\end{abstract}

\section{Introduction}
\label{sec:introduction}

Given a set $X=\{x_1,\ldots, x_n\} \subset \mathbb{R}^d$ of  data points and a similarity function $k:\mathbb{R}^d\times\mathbb{R}^d\rightarrow\mathbb{R}_{\geq 0}$ for any pair of data points $x_i$ and $x_j$, the objective of clustering is to partition these $n$ data points into   clusters such that similar points are in the same cluster. As a fundamental  data analysis technique, clustering has been extensively studied in different disciplines   ranging from algorithms and machine learning, social network analysis, to data science and statistics.

One prominent approach for clustering data points in Euclidean space consists of two simple steps: the first step is to construct a \emph{similarity graph} $\graphk=(V,E,w)$ from  $X$, where every vertex $v_i$ of $\graphg$ corresponds to  $x_i\in X$, and different vertices $v_i$ and $v_j$ are connected by an edge with   weight $w(v_i, v_j)$ if  their similarity $k(x_i, x_j)$ is positive, or higher than some threshold. Secondly, we   apply spectral clustering on $\graphg$, and its output naturally corresponds to some clustering on $X$~\cite{NgJW01}. Because of  its out-performance over traditional clustering algorithms like $k$-means,  this approach has become one of the most popular modern clustering algorithms.

On the other side,  different constructions of similarity graphs have significant impact on  the quality  and time complexity of spectral clustering, which is clearly acknowledged and appropriately  discussed by von Luxburg~\cite{Luxburg07}. Generally speaking,  there are two types of similarity graphs: 
\begin{itemize}
\item  the first one is the $k$-nearest neighbour graph~(\textsf{$k$-NN} graph), in which every vertex $v_i$ connects to $v_j$ if $v_j$ is among the $k$-nearest neighbours of $v_i$. A \textsf{$k$-NN} graph is sparse by construction,
but loses some of the structural information in the dataset since $k$ is usually small and the added edges are unweighted.
% but such construction usually associates with high time complexity due to the application of \textcolor{red}{xx data str. ..} in the construction.
\item the second one is the fully connected graph, in which different vertices $v_i$ and $v_j$ are connected  with weight $w(v_i, v_j) = k(x_i, x_j)$. While a fully connected graph maintains most of the distance information about $X$, this graph is dense and storing such graphs requires \emph{quadratic} memory in $n$.
\end{itemize}
 
Taking the pros and cons of the two constructions into account, one would naturally ask the  question:
\begin{center}
% \emph{Can we \todo{change it to "is it possible"? directly construct a sparse graph that preserves the cluster structure of a fully connected graph? }
\emph{Is it possible to directly construct a sparse graph that preserves the cluster structure of a fully connected similarity graph? }
\end{center}
We answer this question   affirmatively, and present a fast algorithm that
constructs an approximation of the fully connected similarity graph.
Our constructed graph  consists of only $\widetilde{O}(n)$ edges\footnote{We use   $\widetilde{O}(n)$ to represent  $O\left(n \cdot \log^c(n)\right)$ for some constant $c$.}, and  preserves the cluster structure of the fully connected similarity graph.

\subsection{Our Result}
Given any set  $X=\{x_1,\ldots, x_n\} \subset \mathbb{R}^d$  and a kernel function $k:\mathbb{R}^d\times\mathbb{R}^d\rightarrow\mathbb{R}_{\geq 0}$, a fully connected similarity graph $\mathsf{K}=(V,E,w)$ of $X$ consists of $n$ vertices, and every $v_i\in V$ corresponds to $x_i\in X$;   we set  $w(v_i,v_j)\triangleq k(x_i,x_j)$ for any different $v_i$ and $v_j$. 
We introduce an efficient algorithm that constructs a sparse graph $\graphg$ \emph{directly} from  $X$, such that   $\mathsf{K}$ and $\graphg$ share the same cluster-structure, and the graph matrices for $\mathsf{K}$ and $\graphg$ have approximately the same eigen-gap. This ensures that  spectral clustering from  $\graphg$  and $\mathsf{K}$ return approximately the same result.

The design of our algorithm is based on a novel reduction from
 the approximate construction of similarity graphs to the problem of Kernel Density Estimation~(\textsf{KDE}).
% the problem of Kernel Density Estimation~(\textsf{KDE}) to an approximate construction of similarity graphs.
This reduction  shows that any algorithm for the \textsf{KDE}  can be employed   to construct a sparse representation of a fully connected similarity graph, while preserving the cluster-structure of the input data points.  This is summarised 
as follows:
\begin{theorem}[Informal Statement of Theorem~\ref{thm:meta}]\label{thm:informal}
Given a set of data points $X = \{x_1, \ldots, x_n\} \subset \mathbb{R}^d$ as input, there is a randomised algorithm that constructs a sparse graph $\graphg$ of $X$, such that it holds  with probability at least $9/10$ that 
\begin{enumerate}
    \item  graph $\graphg$ has $\widetilde{O}(n)$ edges, 
    \item graph $\graphg$ has the same cluster structure as the fully connected similarity graph $\graphk$ of $X$. 
\end{enumerate}
The algorithm uses an approximate \textsf{KDE} algorithm as a black-box, and has running time $\widetilde{O}(T_{\mathsf{KDE}}(n, n, \epsilon))$ for $\epsilon\leq 1/(6 \log(n))$, where $T_{\mathsf{KDE}}(n, n, \epsilon)$ is the running time of  solving the \textsf{KDE} problem for $n$ data points up to a $(1+\epsilon)$-approximation. 
\end{theorem}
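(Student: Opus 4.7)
The plan is to prove the theorem via a three-layer reduction: (i)~reduce preservation of cluster structure to spectral approximation of the normalised Laplacian of $\mathsf{K}$, (ii)~reduce spectral approximation to importance sampling of edges with probabilities proportional to the normalised weights, and (iii)~reduce this sampling to a sequence of calls to a \textsf{KDE} oracle. First I would invoke the standard fact that the cluster structure of a weighted graph is determined by the bottom eigenvectors of its normalised Laplacian, and that a large eigen-gap between $\lambda_k$ and $\lambda_{k+1}$ is what certifies the presence of $k$ well-separated clusters; hence it suffices to build a sparse $\mathsf{G}$ whose normalised Laplacian spectrum is $(1\pm\epsilon)$-close to that of $\mathsf{K}$. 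Spectral sparsification theory (in the spirit of Spielman--Srivastava) guarantees that sampling $\widetilde{O}(n/\epsilon^{2})$ edges with probability proportional to an upper bound on their effective resistance, and rescaling, yields such a sparsifier; because $k(x_i,x_j)/\deg_{\mathsf{K}}(x_i)+k(x_i,x_j)/\deg_{\mathsf{K}}(x_j)$ upper-bounds the normalised weight and hence the effective resistance, it is enough to sample edges with probabilities of this form.

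The key algorithmic step is to realise this sampling scheme using only the \textsf{KDE} oracle. For each $x_i$, one call to the oracle of accuracy $1+\epsilon$ yields a $(1\pm\epsilon)$ approximation to $\deg_{\mathsf{K}}(x_i) = \sum_{j\neq i} k(x_i,x_j)$, which matches the overall \textsf{KDE} running time $T_{\mathsf{KDE}}(n,n,\epsilon)$. To draw a neighbour $x_j$ of $x_i$ with probability proportional to $k(x_i,x_j)$, I would use a dyadic decomposition of similarity values: partition the possible values $k(x_i,x_j)$ into $O(\log n)$ buckets of the form $[2^{-\ell},2^{-\ell+1})$, use a \textsf{KDE} query restricted (via a threshold) to each bucket to compute the approximate mass of that bucket, pick a bucket with probability proportional to its mass, and then sample a point uniformly inside that bucket by rejection sampling on a uniformly drawn candidate. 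Taking $\widetilde{O}(1)$ samples per vertex and aggregating produces a multi-graph $\mathsf{G}$ with $\widetilde{O}(n)$ edges; the standard matrix Chernoff analysis for random sampling of rank-one matrices then shows that the normalised Laplacian of $\mathsf{G}$ is a $(1\pm O(\epsilon))$-spectral approximation of that of $\mathsf{K}$, with failure probability $\le 1/10$, provided $\epsilon\le 1/(6\log n)$.

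Combining the three reductions, the constructed graph $\mathsf{G}$ has $\widetilde{O}(n)$ edges, its bottom-$k$ eigenvalues (and in particular the eigen-gap) are within $(1\pm O(\epsilon))$ of those of $\mathsf{K}$, and therefore the two graphs share the same cluster structure. The total running time is dominated by $\widetilde{O}(n)$ \textsf{KDE} queries on subsets of $X$, each of size at most $n$, giving the claimed $\widetilde{O}(T_{\mathsf{KDE}}(n,n,\epsilon))$ bound.

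The main obstacle I anticipate is the weighted edge-sampling step: the \textsf{KDE} oracle only returns an approximate \emph{sum} of kernel values, not a sample, so one must implement the dyadic-bucket scheme carefully so that (a)~the bucket masses can be estimated by \textsf{KDE} calls on thresholded kernels, (b)~the rejection-sampling step inside a bucket terminates in $\widetilde{O}(1)$ attempts in expectation, and (c)~the multiplicative errors in each bucket do not inflate the importance-sampling probabilities by more than a constant factor. Handling the boundary effects of the buckets and arguing that the compounded error still satisfies the concentration condition for spectral sparsification is where I expect the bulk of the technical work to lie.
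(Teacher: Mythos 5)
Your overall architecture---approximate the similarity graph by importance-sampling edges, realise the sampler through \textsf{KDE} calls, then apply matrix concentration---matches the paper at a high level, but two of the three reduction steps would fail as stated.

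First, the reduction to $(1\pm\epsilon)$-spectral approximation via degree-based sampling does not go through. The quantity $k(x_i,x_j)/\deg_{\graphk}(v_i) + k(x_i,x_j)/\deg_{\graphk}(v_j)$ is \emph{not} an upper bound on the leverage score $w_e R_e$ of the edge $e=\{v_i,v_j\}$: take two near-cliques joined by a single bridging edge; the bridge has $w_e R_e \approx 1$, while the degree-based quantity is $\Theta(1/n)$. So Spielman--Srivastava-style sampling with these probabilities does not produce a spectral sparsifier. Indeed, the paper cites Bakshi et al.\ precisely because they show that nearly-linear-time spectral sparsification of the Gaussian kernel graph is ruled out under standard assumptions. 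What degree-based sampling does give you is the much weaker \emph{cluster-preserving sparsifier} of Sun--Zanetti (Definition~\ref{def:cps}), which preserves $\phi(A_i)$ up to a factor $O(k)$ and $\lambda_{k+1}(\mathbf{N})$ up to a constant, but does \emph{not} preserve the spectrum or cuts. The paper targets this weaker notion, and that choice is essential, not cosmetic. Claiming $(1\pm O(\epsilon))$ preservation of the bottom-$k$ eigenvalues over-promises: the paper proves only $\Omega(1)$-factor preservation of $\lambda_{k+1}$.

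Second, the dyadic-bucket sampling scheme does not reduce to the available oracle. You propose bucketing by \emph{kernel value} $k(x_i,x_j)\in[2^{-\ell},2^{-\ell+1})$ and estimating each bucket's mass with a ``thresholded'' \textsf{KDE} call. But the \textsf{KDE} primitive, as defined in Section~\ref{sec:prelim:kde}, supports only queries of the form $\sum_{j\in R}k(y,x_j)$ where $R$ is an \emph{explicitly given subset} of the source points; it cannot restrict the sum to those $j$ with $k(x_i,x_j)$ in a prescribed range, since that set depends on the query point $x_i$ and is unknown a priori. Your rejection-sampling step inside a bucket has the same problem: identifying which points fall in the bucket requires $\Theta(n)$ time per query. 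The paper sidesteps this entirely by binary-searching on \emph{indices}, not values: it recursively halves the index set $[1,n]$, calls \textsf{KDE} on each half, and picks a half with probability proportional to the estimated masses, for $O(\log n)$ levels. Because the error compounds multiplicatively over $\lceil\log_2 n\rceil$ levels, the condition $\epsilon\leq 1/(6\log n)$ guarantees the realised per-neighbour sampling probability stays within a constant factor of the ideal one (Lemma~\ref{lem:edgeprob}). Crucially, the paper also performs this binary search \emph{simultaneously} for all $n$ query points (Algorithm~\ref{alg:fsgsample}), and uses superadditivity of $T_{\mathsf{KDE}}$ to argue that each level of the recursion tree costs only $T_{\mathsf{KDE}}(n,n,\epsilon)$; your plan of $\widetilde{O}(n)$ independent \textsf{KDE} queries of size up to $n$ would in general give $\widetilde{O}(n)\cdot T_{\mathsf{KDE}}(1,n,\epsilon)$, which need not be $\widetilde{O}(T_{\mathsf{KDE}}(n,n,\epsilon))$.

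The matrix-Chernoff endgame in your outline is in the right spirit, but it must be run against the degree-normalised Laplacian projected onto the top $n-k$ eigenspace (as in the paper) rather than against the full normalised Laplacian, and it yields constant-factor rather than $(1\pm\epsilon)$-factor guarantees.
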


This result builds a novel  connection between the \textsf{KDE} and the fast construction of similarity graphs, and further represents  a state-of-the-art algorithm for constructing similarity graphs. For instance,  when employing the fast Gauss transform~\cite{LJ} as the \textsf{KDE} solver, 
Theorem~\ref{thm:informal} shows that a sparse representation of the fully connected similarity graph with the Gaussian kernel can be constructed in $\widetilde{O}(n)$ time when  $d$  is  constant.
As such, in the case of low dimensions, spectral clustering runs as fast (up to a poly-logarithmic factor) as the time needed to read the input data points. Moreover, any improved algorithm for the \textsf{KDE} would result in a faster construction of approximate similarity graphs.

%We further highlight that our developed algorithmic framework is highly efficient and practical for approximately constructing a similarity graph.

\begin{figure}[t]
    \centering
    \includegraphics[width=0.95\columnwidth]{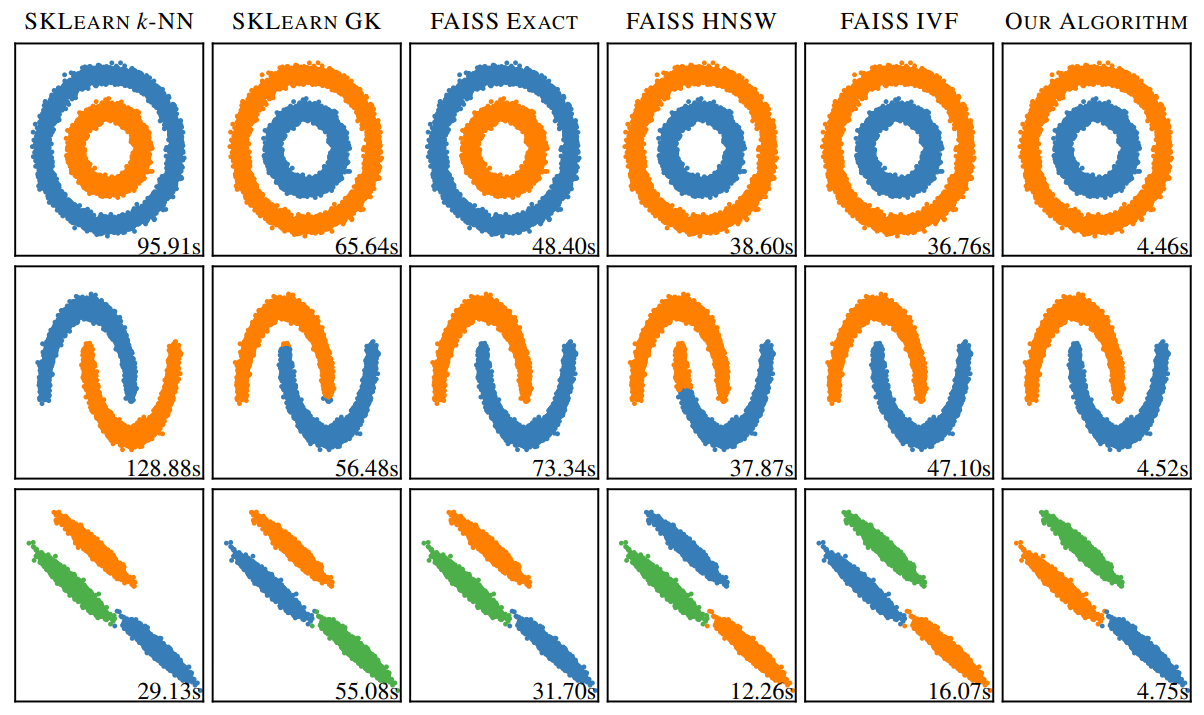}
    \caption{Output of spectral clustering with different  similarity graph constructions. 
    }
    \label{fig:dataset_comparison}
\end{figure}

To   demonstrate the significance of this work in practice, we compare  the performance of our algorithm   with  five competing algorithms from the well-known 
 \texttt{scikit-learn} library~\cite{scikit-learn} and   \texttt{FAISS} library~\cite{faissLibrary}: the algorithm that constructs a fully connected   Gaussian kernel graph, and four algorithms that construct different variants of  
  $k$-nearest neighbour graphs.
% Gaussian kernel similarity graphs and the $k$-nearest neighbours similarity graphs constructed by the
% \texttt{scikit-learn} library~\cite{scikit-learn}.
We apply spectral clustering on the  six constructed similarity graphs,
and compare the quality of the resulting clustering.
% aiming at comparing the three graphs' ``quality'' for clustering.
For a typical input dataset of 15,000 points in $\mathbb{R}^2$, our algorithm runs in 4.7 seconds, in comparison
with between
16.1 --  128.9 seconds for the
five competing  algorithms from \texttt{scikit-learn} and \texttt{FAISS} libraries. 
As shown in Figure~\ref{fig:dataset_comparison},
all the six algorithms return reasonable output.

We further compare the quality of the six algorithms on the BSDS image segmentation dataset~\cite{amfm_pami2011}, and  our algorithm presents clear improvement over the other five   algorithms based on the output's average Rand Index. 
In particular, due to its \emph{quadratic} memory requirement in the input size, one would need to reduce the resolution of every image down to 20,000 pixels in order to 
construct the fully connected  similarity graph with \texttt{scikit-learn} on a typical laptop. In contrast, our algorithm is able to segment the full-resolution image with over 150,000 pixels.
Our experimental result on the BSDS dataset is  showcased in 
 Figure~\ref{fig:bsds_results} and demonstrates that, in comparison with \textsc{SKLearn GK},  our algorithm identifies a more detailed pattern on the butterfly's wing. In contrast, none of the $k$-nearest neighbour based algorithms  from the two libraries is able to identify the wings of the butterfly. 
% See the supplementary material for more   visualised clustering  results on this  dataset.

\begin{figure*}[h]
    \centering
    \subfigure[Original Image] {\includegraphics[width=0.22\textwidth]{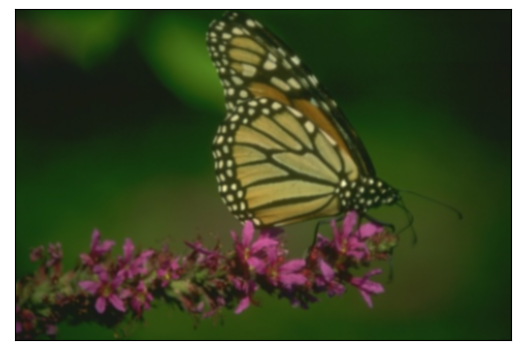}}
    \hspace{1em}
    \subfigure[\textsc{SKLearn GK}] {\includegraphics[width=0.22\textwidth]{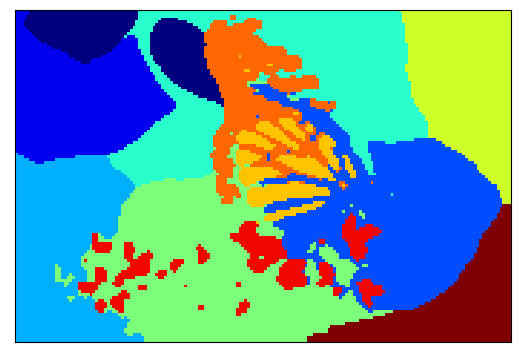}}
    \hspace{1em}
    \subfigure[\textsc{Our Algorithm}] {\includegraphics[width=0.22\textwidth]{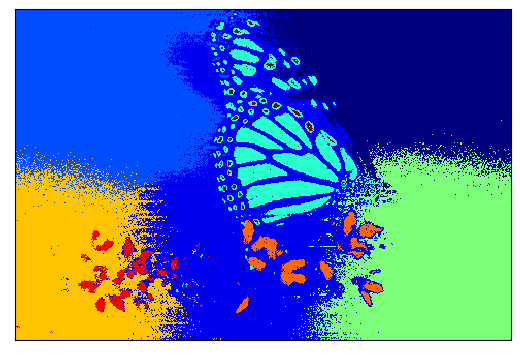}}
    \\
    \subfigure[\textsc{SKLearn $k$-NN}] {\includegraphics[width=0.22\textwidth]{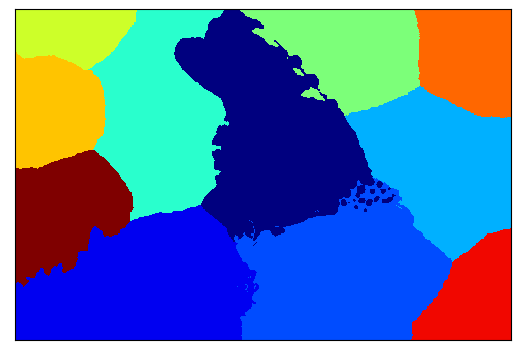}}
    \hspace{1em}
    \subfigure[\textsc{FAISS Exact}] {\includegraphics[width=0.22\textwidth]{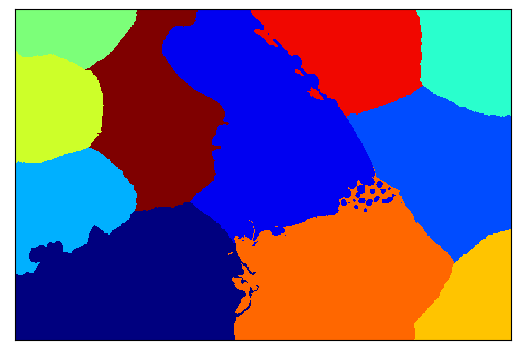}}
    \hspace{1em}
    \subfigure[\textsc{FAISS HNSW}] {\includegraphics[width=0.22\textwidth]{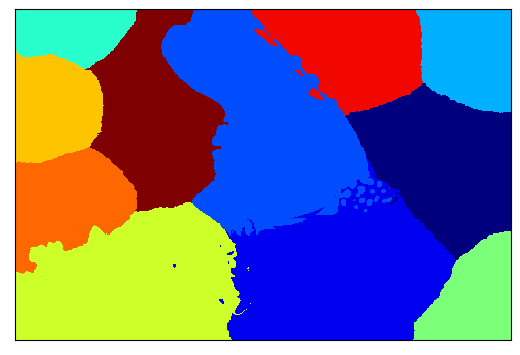}}
    \hspace{1em}
    \subfigure[\textsc{FAISS IVF}] {\includegraphics[width=0.22\textwidth]{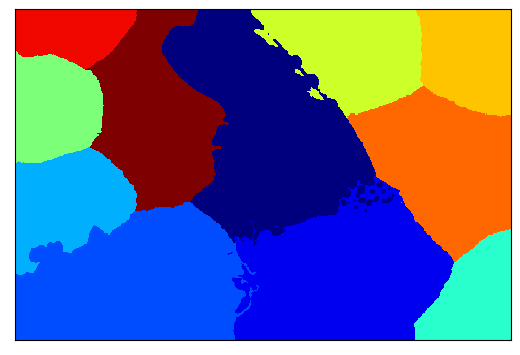}}
    
    % \subfigure[Original Image] {\includegraphics[width=0.22\textwidth]{}}
    % \hspace{1em}
    % \subfigure[\textsc{SKLearn GK}] {\includegraphics[width=0.22\textwidth]{}}
    % \hspace{1em}
    % \subfigure[\textsc{SKLearn $k$-NN}] {\includegraphics[width=0.22\textwidth]{}}
    % \hspace{1em}
    % \subfigure[\textsc{Our Algorithm}] {\includegraphics[width=0.22\textwidth]{}}

    \caption{Comparison on the performance
   of spectral clustering with   different similarity graph constructions.  Here, \textsc{SKLearn GK} is based on the fully connected similarity graph construction, and (d) -- (g) are based on different $k$-nearest neighbour graph constructions from the two libraries.  } \label{fig:bsds_results} %\he{to add experimental result from FAISS}
   % \peter{I think maybe better not to add FAISS images to this figure - they will look very like the knn ones so I don't think they add anything. I'll add the BSDS ARI statistics to the experimental section though.}
\end{figure*}

\subsection{Related Work}

%\textcolor{red}{Our work is directly related to different constructions of similarity graphs, including $\varepsilon$-neighbourhood graph, $k$-nearest neighbour graph, and fully connected similarity graph~\cite{Luxburg07}. The constructions of neighbourhood graphs  are usually based on approximate nearest neighbour data structures~\cite{cacm/AndoniI08}, and locality-sensitive hashing~(\textsf{LSH}) }

There is a number of work on efficient constructions of   
 $\varepsilon$-neighbourhood graphs and  \textsf{$k$-NN} graphs.  For instance, 
 Dong et al.~\cite{DongCL11} presents an  algorithm for approximate \textsf{$k$-NN} graph construction, and their algorithm is based on local search. Wang et al.~\cite{WangSWR18} presents an \textsf{LSH}-based algorithm for constructing an approximate \textsf{$k$-NN}  graph, and  employs
  several sampling and hashing techniques to reduce the  computational and
parallelisation cost. These two algorithms~\cite{DongCL11,WangSWR18} have shown to work very well in practice, but lack a theoretical guarantee on the performance. 
 
 Our work also relates to a large and  growing number of \textsf{KDE} algorithms.  
 Charikar and Siminelakis~\cite{CS17} 
  study the \textsf{KDE} problem through \textsf{LSH}, and present a class of unbiased estimators for kernel density in high dimensions for a variety of commonly used kernels.
Their work has been improved through the sketching technique~\cite{siminelakis19a}, and a revised description and analysis of the original algorithm~\cite{BackursIW19}. Charikar et al.~\cite{charikarKDE} presents a data structure for the \textsf{KDE} problem, and their result essentially matches the query time and space complexity for most studied kernels in the literature.  In addition, there are studies on designing efficient \textsf{KDE} algorithms based on 
interpolation of kernel density estimators~\cite{TurnerLR21}, and coresets~\cite{karnin19a}.

Our work further relates to  efficient constructions of  spectral sparsifiers for kernel graphs. Quanrud~\cite{Quanrud21} studies smooth kernel functions, and shows that an explicit $(1+\varepsilon)$-approximate spectral approximation of the geometric graph with $\widetilde{O}(n/\varepsilon^2)$ edges can be computed in $\widetilde{O}(n/\varepsilon^2)$ time.  
Bakshi et al.~\cite{BIKSZ} proves  that, under the strong exponential time hypothesis, constructing an $O(1)$-approximate spectral sparsifier with $O(n^{2-\delta})$ edges for the Gaussian kernel graph requires $\Omega\left(n\cdot 2^{\log (1/\tau)^{0.32}}\right)$ time, where $\delta<0.01$ is a fixed universal constant and $\tau$ is the minimum entry of the kernel matrix. 
Compared with their results, we show that, when the similarity graph with the Gaussian kernel presents a well-defined structure of clusters, an approximate construction of this similarity graph can be constructed in nearly-linear time.

% The main downside of using kernel methods is efficiency -- given $n$ input points, most kernel-based algorithms need to construct the full $n\times n$ kernel matrix before performing any subsequent computation, thus incurring $\Omega(n^2)$ runtime. Breaking this quadratic barrier for various problems has been a subject of extensive research efforts. 

%\paragraph{Organisation.}  Section~\ref{sec:pre} lists   the notation and background knowledge. We give a brief overview of cluster-preserving sparsifiers and their construction in Section~\ref{sec:cps}, which will be used in our algorithm.  Section~\ref{sec:algorithm} presents  our algorithm, and discusses its main design ideas. We report our experimental results in Section~\ref{sec:experiments}. All the omitted proofs and additional experimental results can be found in the supplementary material. 

% \peter{Key idea: similarity graph construction as fast as KDE (up to log factors). Faster KDE algorithm
% immediately gives faster similarity graph construction. We should expresss the running time of our algorithm in terms of the KDE algorithm used as a subroutine and then give examples for the fast Gauss transform and \cite{backursKDE2019}}

% \peter{Another key advantage to highlight: our algorithm constructs a sparse graph directly, and so
% downstream tasks (such as clustering) are also fast. Since spectral clustering of a sparse graph is
% also $O(n \log(n))$, it is also as 'cheap' as KDE.}

\section{Preliminaries \label{sec:pre}}

%, and is organised as follows: we list the notation and  basis of algorithmic spectral graph theory used in the paper in Section~\ref{sec:notation}; in Section~\ref{sec:prelim:fcg} we define the fully connected similarity graph for a given kernel function; and in  Section~\ref{sec:prelim:kde} we describe the Kernel Density Estimation (\textsf{KDE}) problem and introduce the notation we will use to describe approximation algorithms for \textsf{KDE}.

%\paragraph{Notation\label{sec:notation}.} 

Let $\graphg=(V,E, w_{\graphg})$ be an undirected  graph with weight function $w_{\graphg}:E\rightarrow\mathbb{R}_{\geq 0}$, and  $n\triangleq |V|$. 
The degree of any  vertex $v$  is defined as $\deg_{\graphg}(v)\triangleq \sum_{u\sim v} w_{\graphg}(u,v)$, where we write $u\sim v$ if $\{u,v\}\in E(\graphg)$. For any $S\subset V$, the volume of $S$ is defined by $\vol_\graphg(S)\triangleq\sum_{v\in S} \deg_\graphg(v)$, and the conductance of $S$ is defined by
\[
\phi_\graphg(S)\triangleq \frac{\partial_\graphg(S)}{\vol_\graphg(S)},
\]
where $\partial_{\graphg}(S) \triangleq \sum_{u\in S, v\not\in  S} w_{\graphg}(u,v)$.
For any $k\geq 2$,  we call subsets of vertices $A_1,\ldots, A_k$ a \emph{$k$-way partition} if $A_i\neq\emptyset$ for any $1\leq i\leq k$, $A_i \cap A_j = \emptyset$ for any $i\neq j$, and $\bigcup_{i=1}^k A_i = V$. Moreover, we define the \emph{$k$-way expansion constant} by \[
\rho_\graphg(k)  \triangleq \min_{ \textrm{partition\ } A_1,\ldots, A_k} \max_{1\leq i \leq k }\phi_\graphg(A_i).\] 
Note that a lower value of $\rho_\graphg(k)$ ensures the existence of $k$ clusters $A_1,\ldots, A_k$ of low conductance, i.e, $\graphg$ has at least $k$ clusters.

%Our analysis is based on the algebraic properties of graph matrices defined as follows.  

For any undirected graph $\graphg$, the adjacency matrix  $\mathbf{A}_\graphg$  of $\graphg$ is defined by $\mathbf{A}_\graphg(u,v)= w_\graphg(u,v)$ if $u\sim v$, and $\mathbf{A}_\graphg(u,v)=0$ otherwise. 
We write $\mathbf{D}_\graphg$ as  the diagonal matrix  defined by $\mathbf{D}_\graphg(v,v)= \deg_\graphg(v)$, and the normalised Laplacian of $\graphg$ is defined by 
$
\mathbf{N}_\graphg\triangleq \mathbf{I} - \mathbf{D}_\graphg^{-1/2}\mathbf{A}_\graphg\mathbf{D}_\graphg^{-1/2}$.
For any 
\textsf{PSD} matrix $\mathbf{B}\in\mathbb{R}^{n\times n}$, we write 
the eigenvalues of $\mathbf{B}$ as $\lambda_1(\mathbf{B})\leq \ldots\leq\lambda_n(\mathbf{B})$.  

It is well-known that, while computing $\rho_\graphg(k)$ exactly is \textsf{NP}-hard,  $\rho_\graphg(k)$ is closely related to   $\lambda_k$   through the higher-order Cheeger inequality~\cite{LGT14}: it holds for any $k$ that \[
\lambda_k(\mathbf{N}_\graphg)/2\leq \rho_{\graphg}(k) \leq O(k^3) \sqrt{\lambda_{k}(\mathsf{N}_\graphg)}.\]

\subsection{Fully Connected Similarity Graphs} \label{sec:prelim:fcg}

We use $X\triangleq \{x_1,\ldots x_n\}$ to represent the set of input data points, where  every  $x_i\in\mathbb{R}^d$. Given $X$ and some 
kernel function $k : \mathbb{R}^d \times \mathbb{R}^d \rightarrow \mathbb{R}_{\geq 0}$, we use $\K=(V_{\K},E_{\K},w_{\K})$ to represent the   fully connected similarity graph from $X$, which is constructed as follows:  every $v_i\in V_{\K}$ corresponds to $x_i\in X$, and any pair of different  $v_i$ and $v_j$ is connected by an edge with weight $
w_{\K}(v_i, v_j) = k(x_i, x_j)$. One of the most common kernels used for clustering is the Gaussian kernel, which is defined by 
\[
    k(x_i, x_j) = \exp\left( - \frac{\| x_i - x_j\|_2^2}{\sigma^2} \right)
\]
for some bandwidth parameter $\sigma$.
Other popular kernels include the Laplacian kernel and the exponential kernel which use $\|x_i - x_j\|_1$ and $\|x_i - x_j\|_2$ in the exponent respectively.

\subsection{Kernel Density Estimation} \label{sec:prelim:kde} 

 Our work is based on  algorithms for kernel density estimation~(\textsf{KDE}), which is defined as follows. Given a kernel function $k: \mathbb{R}^d \times \mathbb{R}^d \rightarrow \mathbb{R}_{\geq 0}$ with $n$ source points $x_1,\ldots, x_n \in \mathbb{R}^d$ and $m$ target points $y_1,\ldots, y_m \in \mathbb{R}^d$, the \textsf{KDE} problem is
to compute $g_{[1,n]}(y_1),\ldots g_{[1,n]}(y_m)$, where \begin{equation} \label{eq:gti}
g_{[a,b]}(y_i) \triangleq \sum_{j=a}^b k(y_i, x_j)
\end{equation}
for $1\leq i\leq m$.
While a direct computation of the $m$ values $g_{[1,n]}(y_1),\ldots g_{[1,n]}(y_m)$ requires $m n$ operations, 
there is substantial research to develop faster  algorithms approximating these $m$ quantities.

In this paper we are interested in the algorithms that approximately compute $g_{[1,n]}(y_i)$   
for all $1 \leq i \leq m$ up to a $(1 \pm \epsilon)$-multiplicative error, and use   $T_{\mathsf{KDE}}(m, n, \epsilon)$ to denote the asymptotic complexity of such a \textsf{KDE} algorithm. We also require that $T_{\mathsf{KDE}}(m, n, \epsilon)$ is superadditive in $m$ and $n$; that is, for $m = m_1 + m_2$ and $n = n_1 + n_2$, we have
\begin{equation}\nonumber
    T_{\mathsf{KDE}}(m_1, n_1, \epsilon) + T_{\mathsf{KDE}}(m_2, n_2, \epsilon) \leq T_{\mathsf{KDE}}(m, n, \epsilon);
\end{equation}
it is known that  such property holds for many \textsf{KDE} algorithms~(e.g.,~\cite{AlmanCS20,charikarKDE,LJ}).

\section{Cluster-Preserving Sparsifiers\label{sec:cps}}
A graph sparsifier is a sparse representation of an  input graph that inherits certain properties of the original dense graph.
The efficient construction of sparsifiers plays an important  role in designing a number of nearly-linear time graph algorithms. However, most algorithms for constructing sparsifiers rely on the recursive decomposition of an input graph~\cite{SpielmanT11}, sampling with respect to effective resistances~\cite{LS18,SS11}, or fast \textsf{SDP} solvers~\cite{LS17}; all of these  need the explicit representation of an input graph,  requiring  $\Omega(n^2)$ time and space complexity  for a fully connected graph.

 Sun and Zanetti~\cite{SZ19} study a   variant of graph sparsifiers that mainly preserve the cluster structure of an input graph, and introduce the notion of \emph{cluster-preserving sparsifier} defined as follows:
 \begin{definition}[Cluster-preserving sparsifier]\label{def:cps}
Let  $\graphk = (V, E, w_{\graphk})$  be any graph, and   $\{A_i\}_{i=1}^k$ the $k$-way partition of $\graphk$   corresponding to $\rho_\graphk(k)$. We call 
a re-weighted subgraph $\graphg=(V, F\subset E, w_{\graphg})$   a cluster-preserving sparsifier of $\graphk$ if 
 $\phi_\graphg(A_i) = O(k \cdot \phi_\graphk(A_i))$ for $1 \leq i \leq k$, 
    and $\lambda_{k+1}(\mathbf{N}_\graphg) = \Omega(\lambda_{k+1}(\mathbf{N}_\graphk))$.
\end{definition}
Notice that graph $\graphk$ has exactly $k$ clusters if (i) $\graphk$ has $k$ disjoint subsets $A_1,\ldots, A_k$ of low conductance, and (ii) any $(k+1)$-way partition of  $\graphk$ would include some   $A\subset V$ of high conductance, which would be implied by  a lower bound on $\lambda_{k+1}(\mathbf{N}_\graphk)$ due to 
  the higher-order Cheeger inequality. Together with the well-known eigen-gap heuristic~\cite{LGT14,Luxburg07} and theoretical analysis   on spectral clustering~\cite{MS22,PSZ}, the two properties in Definition~\ref{def:cps} ensures that spectral clustering returns approximately the same output from   $\graphk$ and $\graphh$.\footnote{The most interesting regime for this definition is  $k = \widetilde{O}(1)$ and $\lambda_{k+1}(\mathbf{N}_\graphk)=\Omega(1)$,   and we assume this in the rest of the paper.}

Now we present the algorithm in \cite{SZ19} for constructing a cluster-preserving sparsifier, and we call it the 
\textsf{SZ}  algorithm for simplicity. Given any input graph $\graphk=(V,E,w_{\graphk})$ with weight function $w_{\graphk}$, the algorithm computes  
\[
% p_u(v)  \triangleq \min\left\{ w(u,v)\cdot\frac{C\cdot \log n}{d_u\cdot\lambda_{k+1}},1 \right\}
p_u(v)  \triangleq \min\left\{ C\cdot \frac{\log n}{\lambda_{k+1}}\cdot \frac{w_{\graphk}(u,v)}{\deg_{\graphk}(u)},1 \right\}, \quad\mbox{and}\quad  
p_v(u)  \triangleq \min\left\{ C\cdot \frac{\log n}{\lambda_{k+1}}\cdot \frac{w_{\graphk}(v,u)}{\deg_{\graphk}(v)},1 \right\},
\]
for every edge $e=\{u,v\}$, where $C\in\mathbb{R}^+$ is some constant.
% and $\lambda_{k+1}$ is the $(k+1)$-th smallest eigenvalue of $\mathbf{N}_G$.
Then, the algorithm samples every  edge $e=\{u,v\}$ with probability 
\[
p_e \triangleq p_u(v) + p_v(u) - p_u(v)\cdot p_v(u),\]
and sets the weight of every sampled $e=\{u,v\}$ in $\graphg$ as 
$
w_\graphg(u,v)\triangleq w_\graphk(u,v)/p_e$.
By setting  $F$ as the set  of the sampled edges, the algorithm returns $\graphg=(V, F, w_\graphg)$ as output. It is shown in \cite{SZ19} that, with high probability, the constructed $\graphg$ has $\widetilde{O}(n)$ edges and is a   cluster-preserving sparsifier of $\graphk$.
 
We remark that a cluster-preserving sparsifier is a much weaker notion than a spectral sparsifier, which approximately preserves all the cut values and the eigenvalues of the graph Laplacian matrices. On the other side,  while a cluster-preserving sparsifier is sufficient for the task of graph clustering, the \textsf{SZ} algorithm runs in  $\Omega(n^2)$ time for a fully connected input graph,  since it's based on the computation of the vertex degrees as well as the sampling probabilities $p_u(v)$ for every 
pair of vertices $u$ and $v$.

\section{Algorithm} \label{sec:algorithm}

 This section presents our algorithm that directly constructs an approximation of a fully connected similarity graph from $X\subseteq \mathbb{R}^d$ with $|X|=n$. As the main theoretical contribution, we demonstrate that neither the quadratic space complexity for directly constructing a fully connected similarity graph nor the quadratic time complexity of the \textsf{SZ} 
algorithm is necessary   when approximately constructing a fully connected similarity graph for the purpose of clustering. The performance of our  algorithm is  as  follows:
\begin{theorem}[Main Result]\label{thm:meta}
Given a set of data points $X = \{x_1, \ldots, x_n\} \subset \mathbb{R}^d$ as input, there is a randomised algorithm that constructs a sparse graph $\graphg$ of $X$, such that it holds  with probability at least $9/10$ that  
\begin{enumerate}
    \item  graph $\graphg$ has $\widetilde{O}(n)$ edges, 
    \item graph $\graphg$ has the same cluster structure as the fully connected similarity graph $\graphk$ of $X$; that is, if $\graphk$ has $k$ well-defined clusters, then it holds that     
    $\rho_{\graphg}(k)=O(k\cdot \rho_{\graphk}(k))$ and $\lambda_{k+1}(\mathbf{N}_{\graphg})=\Omega(\lambda_{k+1}(\mathbf{N}_{\graphk}))$.
\end{enumerate}
The algorithm uses an approximate \textsf{KDE} algorithm as a black-box, and has running time $\widetilde{O}(T_{\mathsf{KDE}}(n, n, \epsilon))$ for $\epsilon\leq 1/(6 \log(n))$. 
\end{theorem}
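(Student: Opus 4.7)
The plan is to simulate the \textsf{SZ} sparsification algorithm of Section~\ref{sec:cps} without ever materialising the dense graph $\graphk$, using the approximate \textsf{KDE} primitive both to estimate vertex degrees and to perform weighted neighbour sampling. Observe first that $\deg_{\graphk}(u)=\sum_{j\neq u} k(x_u,x_j)=g_{[1,n]}(x_u)-k(x_u,x_u)$, so a single $(1\pm\epsilon)$-approximate \textsf{KDE} call with $X$ as both source and target set yields multiplicative estimates $\widetilde{d}(u)$ of every degree in time $T_{\mathsf{KDE}}(n,n,\epsilon)$. Next, the \textsf{SZ} sampling probabilities $p_u(v)=\min\{C\log(n)/\lambda_{k+1}\cdot k(x_u,x_v)/\deg_{\graphk}(u),\,1\}$ sum to $\widetilde{O}(1)$ over $v$ (using the standing assumption $\lambda_{k+1}=\Omega(1)$), so it suffices, for every $u$, to draw $T=\widetilde{O}(1)$ i.i.d.\ indices from the distribution proportional to $k(x_u,\cdot)$ and re-weight each sampled edge $\{u,v\}$ by $1/p_e$ exactly as in \textsf{SZ}.

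The key new subroutine is an algorithm $\samplealg(u)$ that returns an index $j$ with probability proportional to $k(x_u,x_j)$, which I would implement by balanced divide-and-conquer on the index range $[1,n]$. At a node associated with a range $[a,b]$, split at the midpoint $m$, invoke \textsf{KDE} to obtain $(1\pm\epsilon)$-approximations of $g_{[a,m]}(x_u)$ and $g_{[m+1,b]}(x_u)$, descend into the left (resp.\ right) half with probability proportional to the estimated mass, and stop at a singleton. The tree has depth $O(\log n)$. The observation that controls the running time is that the \textsf{KDE} queries issued at a fixed depth by the $n$ source vertices (and across their $T$ independent traversals) act on source ranges that partition $[1,n]$; by the stated superadditivity of $T_{\mathsf{KDE}}$, the total work at each depth is at most $\widetilde{O}(T_{\mathsf{KDE}}(n,n,\epsilon))$, so the overall running time is $\widetilde{O}(T_{\mathsf{KDE}}(n,n,\epsilon))$ as claimed.

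For correctness, I would show that the compounded multiplicative error along any root-to-leaf path in the sampling tree is $(1\pm O(\epsilon\log n))$, since there are $O(\log n)$ levels and each branching probability is $(1\pm O(\epsilon))$-accurate. Choosing $\epsilon\leq 1/(6\log n)$ therefore makes this a small constant, and together with the $(1\pm\epsilon)$-accurate degree estimates the effective per-edge sampling probability is within a constant factor of the ideal $p_e$. This is enough to plug into the \textsf{SZ} analysis and conclude with constant probability that $\graphg$ has $\widetilde{O}(n)$ edges, $\phi_{\graphg}(A_i)=O(k\cdot\phi_{\graphk}(A_i))$, and $\lambda_{k+1}(\mathbf{N}_{\graphg})=\Omega(\lambda_{k+1}(\mathbf{N}_{\graphk}))$; the first bound yields $\rho_{\graphg}(k)=O(k\cdot\rho_{\graphk}(k))$. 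The main obstacle is precisely this last step: arguing that the approximate \textsf{KDE}-driven tree sampling really does produce per-edge probabilities close to the exact $p_e$ \emph{simultaneously} for every edge, since the concentration arguments in \textsf{SZ} are sensitive to the ratio $w_{\graphk}(u,v)/p_e$. The $\epsilon\leq 1/(6\log n)$ threshold in the statement is exactly what is needed to control the compounded error over the $O(\log n)$-depth sampling tree.
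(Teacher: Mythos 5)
Your proposal matches the paper's proof essentially step for step: the KDE-driven binary-search sampling of neighbours, the superadditivity argument that bounds the per-level cost by $T_{\mathsf{KDE}}(n,n,\epsilon)$ across the $O(\log n)$-depth recursion tree, the $(1\pm\epsilon)^{O(\log n)}$ compounding controlled by $\epsilon \le 1/(6\log n)$, and the reduction to the \textsf{SZ} cluster-preserving sparsifier. The one step you leave as an acknowledged obstacle — showing the approximate per-edge probabilities suffice — is exactly what occupies the bulk of the paper's appendix: rather than invoking \textsf{SZ} as a black box, the paper re-derives the Bernstein degree-concentration bound, the Markov bound on $\partial_{\graphg}(S_i)$, and the matrix Chernoff bound for $\lambda_{k+1}$, carrying through the constant-factor distortion introduced both by the actual sampling probability $\widetilde p$ and by the re-weighting with the \emph{estimate} $\widehat p(i,j)$ in place of the true $p(i,j)$.
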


%\paragraph{Sampling neighbours of one vertex.} 

\subsection{Algorithm Description} At a very  high level, our designed algorithm applies a \textsf{KDE} algorithm as a black-box, and constructs a cluster-preserving sparsifier by simultaneous sampling of the edges from a \emph{non-explicitly constructed} fully connected graph. To explain our technique, we   first claim that, for an arbitrary   $x_i$,  a random $x_j$ can be sampled with probability $k(x_i, x_j) / \deg_\graphk(v_i)$ through $O(\log n)$  queries to a \textsf{KDE} algorithm.  To see this, notice that we can apply a KDE algorithm   to compute the probability that the sampled neighbour
is in some set $X_1 \subset X$, i.e., 
\[
    \mathbb{P}\left[z \in X_1\right] = \sum_{x_j \in X_1} \frac{k(x_i, x_j)}{\deg_\graphk(v_i)} = \frac{g_{X_1}(x_i)}{g_{X}(x_i)},
\]
where   we use $g_X(y_i)$ to denote that the \textsf{KDE} is taken with respect to the set of source points $X$. Based on this, we recursively split the set of possible neighbours in half and choose between the two subsets with the correct probability.
The sampling procedure is summarised as follows, and is illustrated in Figure~\ref{fig:algrecursion}. We remark that the method of sampling a random neighbour of a vertex in $\graphk$ through \textsf{KDE} and binary search also appears in Backurs et al.~\cite{BIKSZ}
\begin{enumerate}
    \item Set the feasible neighbours to be $X = \{x_1, \ldots, x_n\}$.
    \item While $|X| > 1$:
    \begin{itemize}
        \item Split $X$ into $X_1$ and $X_2$ with $|X_1|=\lfloor |X|/2 \rfloor$ and $ |X_2| = \lceil |X|/2 \rceil$.
        \item Compute $g_{X}(x_i)$ and $g_{X_1}(x_i)$;  set $X\gets X_1$ with probability $g_{X_1}(x_i) / g_{X}(x_i)$, and $X\gets X_2$ with probability $1-g_{X_1}(x_i) / g_{X}(x_i)$.
    % \[
    %     X \gets \twopartdefow{X_1}{r \leq g_{X_1}(x_i) / \left(g_{X_1}(x_i) + g_{X_2}(x_i)\right)}{X_2}
    % \]
%\[ X \gets {
%	\left\{
%		\begin{array}{ll}
%			X_1 & \mbox{with probability } g_{X_1}(x_i) / g_{X}(x_i) \\
%			X_2 & \mbox{otherwise.}
%		\end{array} 	\right.
%}
%\]
    \end{itemize}
    \item Return the remaining element in $X$ as the sampled neighbour.
\end{enumerate}

\begin{figure}
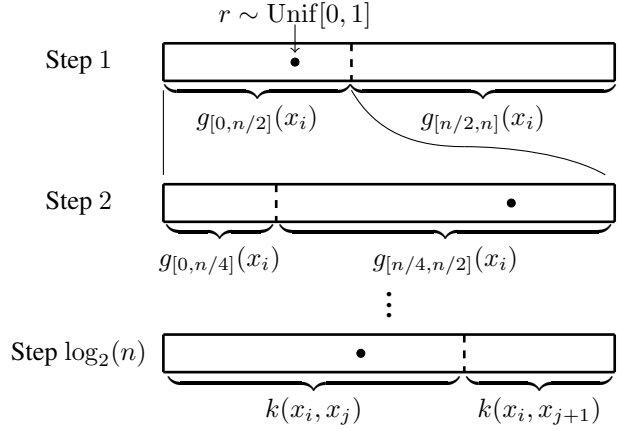

    \centering
    \tikzfig{figures/alg}
    \caption{The procedure of sampling a neighbour $v_j$ of $v_i$ with probability $k(x_i, x_j) / \deg_\graphk(v_i)$.
    Our algorithm performs a binary search to find the sampled neighbour.
    At each step, the value of two kernel density estimates are used to determine where the sample lies.
    Notice that the algorithm doesn't  compute any edge weights directly until the last step.
    \label{fig:algrecursion}}
\end{figure}

%\paragraph{Simultaneously sampling for all vertices.} 

Next we  generalise this idea and show that,  instead of   sampling a neighbour of one vertex at a time, a \textsf{KDE} algorithm allows us to  sample neighbours of every vertex in the graph ``simultaneously''.  Our designed sampling procedure is formalised in 
Algorithm~\ref{alg:fsgsample}.

%\begin{enumerate}
 %   \item We are given a set of vertices $S = \{y_1, \ldots, y_m\}$ and a set of feasible neighbours $X = \{x_1, \ldots, x_n\}$.
  %  \item If $|X| = 1$, then the sampled neighbour for all $y_i \in S$ is the remaining element in $|X|$.
   % \item Otherwise:
  %  \begin{itemize}
   %     \item Split $X$ into $X_1$ and $X_2$ with $|X_1|=\lfloor |X|/2 \rfloor$ and $ |X_2| = \lceil |X|/2 \rceil$.
    %    \item Compute $g_{X}(x_i)$ and $g_{X_1}(x_i)$ for all $y_i \in S$, and construct $S_1$ and $S_2$ such that
     %   \begin{itemize}
      %      \item $y_i \in S_1$ with probability $g_{X_1}(y_i) / g_X(y_i)$ and
       %     \item $y_i \in S_2$ otherwise.
      %  \end{itemize}
        % \begin{align*}
        %     y_i \in S_1 & \mbox{ with probability } g_{X_1}(y_i) / g_{X}(y_i) \mbox{ and}\\
        %     y_i \in S_2 & \mbox{ otherwise.}
        % \end{align*}
% \[
% X \gets {
% 	\left\{
% 		\begin{array}{ll}
% 			X_1 & \mbox{with probability } g_{X_1}(x_i) / g_{X}(x_i) \\
% 			X_2 & \mbox{otherwise.}
% 		\end{array}
% 	\right.
% }
% \]
  %  \item Make two recursive calls to the algorithm with inputs $(S_1, X_1)$ and $(S_2, X_2)$.
   % \end{itemize}
%\end{enumerate}

\begin{algorithm}[h]
    \caption{\samplealg} 
    \begin{multicols}{2}

    \label{alg:fsgsample}
\begin{algorithmic}[1]
    \STATE {\bfseries Input:}
     set $S$ of  $\{y_i\}$ \\
    \hspace{2.7em} set $X$ of $\{x_i\}$
    \STATE {\bfseries Output:}\\
    \hspace{1em}  $E = \{(y_i, x_j)~\mbox{for some $i$ and $j$}\}$
    \IF {$\left| X \right| = 1$}
        % \STATE {\bfseries return} $\{(y_i, x_j) : y_i \in S, x_j \in X \}$
        \STATE {\bfseries return} $S \times X$
    \ELSE
        \STATE $X_1 = \{x_j : j < \left| X \right| / 2\}$
        \STATE $X_2 = \{x_j : j \geq \left| X \right| / 2\}$
        \STATE Compute $g_{X_1}(y_i)$ for all $i$ with  a \textsf{KDE} algorithm
        \STATE Compute $g_{X_2}(y_i)$ for all $i$ with a \textsf{KDE} algorithm
        \STATE $S_1 = S_2 = \emptyset$
        \FOR{$y_i \in S$}
            \STATE $r \sim \mathrm{Unif}[0, 1]$
            \IF {$r \leq g_{X_1}(y_i) / (g_{X_1}(y_i) + g_{X_2}(y_i))$}
                \STATE $S_1 = S_1 \cup \{y_i\}$
            \ELSE
                \STATE $S_2 = S_2 \cup \{y_i\}$
            \ENDIF
        \ENDFOR
        \STATE {\bfseries return} $\samplealg(S_1, X_1)\cup \samplealg(S_2, X_2)$
    \ENDIF
\end{algorithmic}
\end{multicols}
\end{algorithm}

Finally, to construct a cluster-preserving sparsifier,  we apply Algorithm~\ref{alg:fsgsample} to  sample $O(\log n)$ neighbours for every vertex $v_i$, 
and set the weight of every sampled edge $v_i\sim v_j$ as
\begin{equation}\label{eq:newweight}
    w_\graphg(v_i, v_j) = \frac{k(x_i, x_j)}{\widehat{p}(i, j)},
\end{equation}
where $\widehat{p}(i, j)\triangleq  \widehat{p}_i(j) + \widehat{p}_j(i) - \widehat{p}_i(j) \cdot \widehat{p}_j(i)$ is an estimate  of the sampling probability of   edge $v_i\sim v_j$, and 
\[
\widehat{p}_i(j) \triangleq \min\left\{6C\cdot\log n \cdot \frac{k(x_i, x_j)}{ g_{[1, n]}(x_i)}, 1\right\}
\]
for some constant $C\in\mathbb{R}^+$;  see Algorithm~\ref{alg:fsg} for the formal description.

\begin{algorithm}[t]
\begin{multicols}{2}

   \caption{\fsgalg}
   \label{alg:fsg}
\begin{algorithmic}[1]
   \STATE {\bfseries Input:} data point set $X = \{x_1, \ldots, x_n\}$
   \STATE {\bfseries Output:} similarity graph $\graphg$
   \STATE $E = \emptyset$, $L = 6 C\cdot \log (n)/\lambda_{k+1}$
   \FOR {$\ell \in [1, L]$}
        \STATE $E = E \cup \samplealg(X, X)$
   \ENDFOR
   \STATE Compute $g_{[1,n]}(x_i)$ for each $i$ with a  \textsf{KDE} algorithm
   \FOR {$(v_i, v_j) \in E$}
    \STATE $\widehat{p}_i(j) = \min\left\{L \cdot k(x_i, x_j) / g_{[1,n]}(x_i), 1\right\}$
    \STATE $\widehat{p}_j(i) = \min\left\{L \cdot k(x_i, x_j) / g_{[1,n]}(x_j), 1\right\}$
    \STATE $\widehat{p}(i, j) = \widehat{p}_i(j) + \widehat{p}_j(i) - \widehat{p}_i(j) \cdot \widehat{p}_j(i)$
    \STATE Set $w_\graphg(v_i, v_j) = k(x_i, x_j) / \widehat{p}(i, j)$
   \ENDFOR
   \STATE {\bfseries return} graph $\G = (X, E, w_\graphg)$
\end{algorithmic}
\end{multicols}
\end{algorithm}

\begin{wrapfigure}{r}{0.45\textwidth}
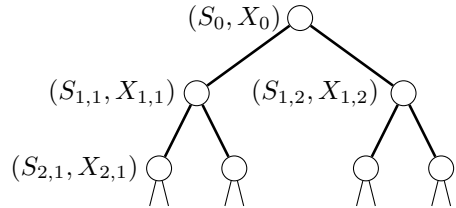

    \centering
    \tikzfig{figures/recursion}
    \caption{\small{The recursion tree for Algorithm~\ref{alg:fsgsample}.
    \label{fig:algtree}}}
   % \vspace{-1cm}
\end{wrapfigure}

\subsection{Algorithm Analysis} Now we analyse Algorithm~\ref{alg:fsg}, and sketch the proof of Theorem~\ref{thm:meta}. 
We first analyse the running time of Algorithm~\ref{alg:fsg}. Since it involves   
$O(\log n)$ executions of Algorithm~\ref{alg:fsgsample} in total, it is sufficient  to examine the running time of Algorithm~\ref{alg:fsgsample}.

We visualise the recursion of   Algorithm~\ref{alg:fsgsample}  with respect to $S$ and $X$ in Figure~\ref{fig:algtree}. Notice that,
although the number of nodes doubles at each level of the recursion tree, the total number of samples $S$ and data points $X$ remain constant for each level of the tree:   it holds for any $i$ that
$
\bigcup_{j=1}^{2^i} S_{i,j}= S_0
$
and 
$
\bigcup_{j=1}^{2^i} X_{i,j}= X_0$. 
Since the running time of the \textsf{KDE} algorithm is superadditive, the combined running time of all nodes at level $i$ of the tree is
\begin{align*}
    T_{i} & = \sum_{j = 1}^{2^i} T_{\textsf{KDE}}(|S_{i, j}|, |X_{i, j}|, \epsilon) \leq T_{\textsf{KDE}}\left(\sum_{j = 1}^{2^i} |S_{i, j}|, \sum_{j = 1}^{2^i} |X_{i, j}|, \epsilon \right) = T_{\textsf{KDE}}(n, n, \epsilon).
\end{align*}
Hence, the total running time of Algorithm~\ref{alg:fsg} is $\widetilde{O}(T_{\textsf{KDE}}(n, n, \epsilon))$ as the tree has $\log_2(n)$ levels. Moreover,   when applying  the \emph{Fast Gauss Transform}~(\textsf{FGT})~\cite{LJ} as the \textsf{KDE} algorithm, the total running time of  Algorithm~\ref{alg:fsgsample} is $\widetilde{O}(n)$  when $d = O(1)$ and $\tau = \Omega(1 / \mathrm{poly}(n))$.

% Finally, we prove that the output of Algorithm~\ref{alg:fsg} is a cluster preserving sparsifier of a fully connected similarity graph, and has $\widetilde{O}(n)$ edges. Notice that, comparing with the sampling probabilities $p_u(v)$ and $p_v(u)$ used in the \textsf{SZ} algorithm, Algorithm~\ref{alg:fsg} applies $\widehat{p}_i(j)$ and $\widehat{p}_j(i)$ to sample an edge $\{v_i, v_j\}$; these quantities $\widehat{p}_i(j)$s are computed through $O(\log n)$ recursive executions of a \textsf{KDE}
%  algorithm, each of which introduces a $(1+\varepsilon)$-multiplicative error.  We carefully examine these $(1+\varepsilon)$-multiplicative errors from $O(\log n)$ recursive executions of a \textsf{KDE} algorithm, and prove that the actual sampling probability $\widehat{p}(i,j)$ used in Algorithm~\ref{alg:fsg} is an $O(1)$-approximation of $p_e$ for every $e=\{v_i, v_j\}$. As such the output of Algorithm~\ref{alg:fsg} is a cluster-preserving sparsifier of a fully connected similarity graph, and satisfies the two properties of Theorem~\ref{thm:meta}.  The total number of edges in $\graphg$ follows from the sampling scheme.

Finally, we prove that the output of Algorithm~\ref{alg:fsg} is a cluster preserving sparsifier of a fully connected similarity graph, and has $\widetilde{O}(n)$ edges.
Notice that, comparing with the sampling probabilities $p_u(v)$ and $p_v(u)$ used in the \textsf{SZ} algorithm, Algorithm~\ref{alg:fsg} samples each edge through $O(\log n)$ recursive executions of a \textsf{KDE}
 algorithm, each of which introduces a $(1+\epsilon)$-multiplicative error.
 We carefully examine these $(1+\epsilon)$-multiplicative errors and prove that the actual sampling probability $\widetilde{p}(i,j)$ used in Algorithm~\ref{alg:fsg} is an $O(1)$-approximation of $p_e$ for every $e=\{v_i, v_j\}$. As such the output of Algorithm~\ref{alg:fsg} is a cluster-preserving sparsifier of a fully connected similarity graph, and satisfies the two properties of Theorem~\ref{thm:meta}.  The total number of edges in $\graphg$ follows from the sampling scheme.
 We refer the reader to the supplementary material for the complete proof of Theorem~\ref{thm:meta}.

% By combining the KDE queries for different target points, Algorithm~\ref{alg:fsgsample} samples a neighbour of every vertex in the graph by performing the binary search procedure from Figure~\ref{fig:algrecursion}
% and tracking the samples for all vertices simultaneously.

% When given the input $(S, X)$, Algorithm~\ref{alg:fsgsample}
% splits $X$ into two equal subsets $X_1$ and $X_2$ and assigns each $y_i \in S$ to either $S_1$ or $S_2$ according to the \textsf{KDE} estimates.
% Then, the algorithm
% makes two recursive calls with inputs $(S_1, X_1)$ and $(S_2, X_2)$.
% Figure~\ref{fig:algtree} visualises the resulting recursion tree.
% Since $|X_1| = |X_2| = |X|/2$, the depth of the tree is $\log_2(n)$.

% In order to bound the running time of Algorithm~\ref{alg:fsgsample}, we note that

%\begin{theorem} \label{thm:main}   Given data $x_1, \ldots x_n \in \mathbb{R}^d$,     Algorithm~\ref{alg:fsg} constructs a cluster-preserving sparsifier of the complete similarity graph with $\widetilde{O}(n)$ edges in time $\widetilde{O}(T_{\textsf{KDE}}(n, n, \epsilon))$ for $\epsilon = O(1 / \log(n))$. \end{theorem}

% Applying this theorem with the fast Gauss transform~\cite{LJ} and Lemma~\ref{lem:fgt} gives the following guarantee for the low-dimensional case.
%\begin{corollary} Given data $x_1, \ldots, x_n \in \mathbb{R}^d$ with $d = O(1)$ and $\tau = \Omega(1 / \mathrm{poly}(n))$, Algorithm~\ref{alg:fsg} with the fast Gauss transform constructs a cluster-preserving sparsifier of the Gaussian kernel similarity graph with $\widetilde{O}(n)$ edges in time $\widetilde{O}(n)$. \end{corollary}

\section{Experiments} \label{sec:experiments}

In this section, we empirically evaluate the performance of spectral clustering with our new algorithm for constructing similarity graphs.
We compare our algorithm with the
% spectral clustering algorithm
algorithms
for similarity graph construction 
provided by the \texttt{scikit-learn} library~\cite{scikit-learn}
and the \texttt{FAISS} library~\cite{faissLibrary}
which are commonly used machine learning libraries for Python.
% The \texttt{scikit-learn} library can construct either the complete Gaussian kernel similarity graph, or the $k$-nearest neighbours graph.
In the remainder of this section, we compare the following six spectral clustering methods. 
\begin{enumerate}\itemsep 0.5pt
    \item \textsc{SKLearn GK}: spectral clustering with the fully connected Gaussian kernel similarity graph constructed with the \texttt{scikit-learn} library.
    \item \textsc{SKLearn  $k$-NN}: spectral clustering with the $k$-nearest neighbour similarity graph constructed with the \texttt{scikit-learn} library.
    \item \textsc{FAISS Exact}: spectral clustering with the exact $k$-nearest neighbour graph constructed with the \texttt{FAISS} library.
    \item \textsc{FAISS HNSW}: spectral clustering with the approximate $k$-nearest neighbour graph constructed with the ``Hierarchical Navigable Small World'' method~\cite{faissHNSW}.
    \item \textsc{FAISS IVF}: spectral clustering with the approximate $k$-nearest neighbour graph constructed with the ``Invertex File Index'' method~\cite{faissIVF}.
    % \item \red{\textsc{SZ}: spectral clustering with the Gaussian kernel similarity graph sparsified by the \textsc{SZ}~\cite{SZ19} algorithm. This algorithm first constructs the complete similarity graph, using $O(n^2)$ time and space.}
    \item \textsc{Our Algorithm}: spectral clustering with the Gaussian kernel similarity graph constructed by Algorithm~\ref{alg:fsg}.
\end{enumerate}
 
% We implement Algorithm~\ref{alg:fsg} in C++, using the implementation of the Fast Gauss Transform provided by Yang et al.\ \cite{ifgt}, and use the Python \texttt{SciPy} library~\cite{scipy} to compute eigenvectors of the returned graph for spectral clustering.
We implement Algorithm~\ref{alg:fsg} in C++, using the implementation of the Fast Gauss Transform provided by Yang et al.\ \cite{ifgt}, and use the Python \texttt{SciPy}~\cite{scipy} and \texttt{stag}~\cite{stag} libraries
for eigenvector computation and graph operations respectively.
The \texttt{scikit-learn} and \texttt{FAISS} libraries are well-optimised and use C, C++, and FORTRAN for efficient implementation of core algorithms.
We first employ classical synthetic clustering datasets
% \texttt{twomoons} dataset
to clearly compare how the running time of different algorithms is affected by the number of data points. This experiment highlights the nearly-linear time complexity of our algorithm. 
Next we  demonstrate the applicability of our new algorithm for image segmentation on the Berkeley Image Segmentation Dataset (BSDS)~\cite{amfm_pami2011}. 

For each experiment,
we set $k = 10$ for the approximate nearest neighbour algorithms.
% and select the $\sigma$ parameter of the Gaussian kernel such that the average degree of the constructed graph is close to $10$.
For the synthetic datasets, we set the $\sigma$ value of the Gaussian kernel used by \textsc{SKLearn GK} and \textsc{Our Algorithm} to be $0.1$, and for the BSDS experiment we set $\sigma = 0.2$.
This choice follows the heuristic suggested by von Luxburg~\cite{Luxburg07} to choose $\sigma$ to approximate the distance from a point to its $k$th nearest neighbour.
% \textcolor{red}{Throughout the section we always employ the Gaussian kernels when constructing the similarity graph, and set the value of $\sigma$ in the way that ...;
% For every input instance, we always apply the same $\sigma$ value to construct a similarity graph for a fair comparison of different algorithms' output.
% \he{we should add more experimental detail here; one reviewer mentioned that more exp. design detail should be added here.}
All experiments are performed on an HP ZBook laptop with an 11th Gen Intel(R) Core(TM) i7-11800H @ 2.30GHz processor and 32 GB RAM.
The code to reproduce our results is available at \url{https://github.com/pmacg/kde-similarity-graph}.

\subsection{Synthetic Dataset}
In this experiment we use the \texttt{scikit-learn} library to generate 15,000 data points in $\mathbb{R}^2$ 
from a variety of classical synthetic clustering datasets.
The data is generated with the \texttt{make\_circles}, \texttt{make\_moons}, and \texttt{make\_blobs} methods of the \texttt{scikit-learn} library with a noise parameter of $0.05$.
The linear transformation $\{\{0.6, -0.6\}, \{-0.4, 0.8\}\}$ is applied to the blobs data to create asymmetric clusters.
% \he{write 1-2 sentences more about the choice of parameters so that the reader could reproduce the result.}
As shown in Figure~\ref{fig:dataset_comparison}, all of the algorithms return approximately the same clustering, but our algorithm runs much faster than
the ones from \texttt{scikit-learn} and \texttt{FAISS}.

We further compare the speedup of our algorithm against the five competitors on the \texttt{two\_moons} dataset with an increasing number of data points, and our result is reported in Figure~\ref{fig:twomoons}. 
% \he{list the parameters that are used to generate the input instance?}
Notice that the running time of the  \texttt{scikit-learn} and \texttt{FAISS} algorithms scales roughly quadratically with the size of the input, while the running time of  our  algorithm is nearly linear.
Furthermore, we note that on a laptop with 32 GB RAM, the \textsc{SKLearn GK} algorithm could not scale beyond 20,000 data points due to its quadratic memory requirement, while
our new algorithm can cluster 1,000,000 data points in 240 seconds under the same memory constraint.

\begin{figure}[h]
\centering
\subfigure{
    \includegraphics[width=0.4\columnwidth]{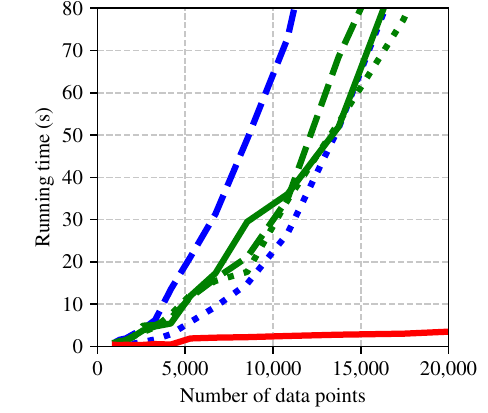}
}
\subfigure{
    \includegraphics[width=0.4\columnwidth]{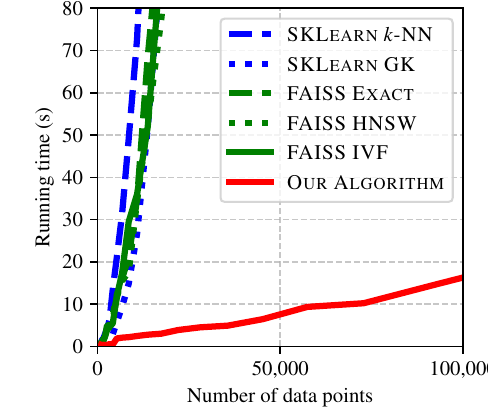}
}
\caption{Comparison of the running time of spectral clustering   on the two moons dataset. 
In every case, all algorithms return the correct clusters.
\label{fig:twomoons} }
\end{figure}

\subsection{BSDS Image Segmentation Dataset}
Finally we study  the application of spectral clustering for image segmentation on the BSDS dataset.
The dataset contains $500$ images with several ground-truth segmentations for each image.
Given an input image, we consider each pixel to be a point $(r, g, b, x, y)^\intercal \in \mathbb{R}^5$ where  $r$, $g$,  $b$ are the red, green,  blue pixel values and $x$, $y$
are the coordinates of the pixel in the image.
Then, we construct   a similarity graph based on  these points,  and apply spectral clustering, for which   we set the number of clusters to be the median number of clusters in the provided ground truth segmentations. Our experimental result is   reported  in Table~\ref{tab:bsds_results}, where the ``Downsampled'' version is employed to reduce the resolution of the image to 20,000 pixels in order to run  the \textsc{SKLearn GK} and the ``Full Resolution'' one is to apply the original image  of over 150,000 pixels  as input.  This set of experiments demonstrates our algorithm produces better clustering results with repsect to the average Rand Index~\cite{randindex}.

% Appendix~\ref{app:experiments} includes further examples from the BSDS dataset which demonstrate that our new algorithm outperforms \texttt{scikit-learn} for image segmentation.

% find more applications and motivate further research into the efficient use of kernel methods for clustering.

\begin{table}[h]
    \centering
    \caption{The average Rand Index of the output from  the six algorithms. Due to  the quadratic space complexity constraint,   the \textsc{SKLearn GK} cannot be applied to the full resolution image.
    \label{tab:bsds_results}}
    \begin{tabular}{ccc}
    \toprule
        Algorithm & Downsampled & Full Resolution \\ 
        \midrule
        \textsc{SKLearn GK} & 0.681  & -  \\
        \textsc{SKLearn $k$-NN} & 0.675  & 0.682  \\
        \textsc{FAISS Exact} & 0.675  & 0.680  \\
        \textsc{FAISS HNSW} & 0.679  & 0.677  \\
        \textsc{FAISS IVF} & 0.675  & 0.680  \\
        \textsc{Our Algorithm} & 0.680  & 0.702  \\
    \bottomrule
    \end{tabular}
\end{table}

\section{Conclusion}

In this paper we   develop a new technique that constructs   a similarity graph, and  show that an approximation algorithm for the \textsf{KDE} can be   employed   to  construct a similarity graph with proven approximation guarantee. Applying   
 the publicly   available implementations of the \textsf{KDE} as a black-box, our algorithm outperforms  five competing ones from the well-known \texttt{scikit-learn} and \texttt{FAISS} libraries  for the classical  datasets of low dimensions. We believe that our work will motivate more research on faster algorithms for the  \text{KDE} in higher dimensions and their efficient implementation, as well as more efficient constructions of similarity graphs.

\section*{Acknowledgements}

We would like to thank the anonymous reviewers for their  valuable comments on the paper.
This work is supported by an
EPSRC Early Career Fellowship (EP/T00729X/1).

\bibliographystyle{plain}

\bibliography{reference}

% NeurIPS guidelines require that the appendix starts on a new page
\newpage

\appendix

\appendix

%The supplementary material of the paper is organised as follows:  we present the complete proof of Theorem~\ref{thm:meta} in Section~\ref{sec:proof}, and present some additional experimental results in Section~\ref{app:experiments}.

\section{Proof of Theorem~\ref{thm:meta}} \label{sec:proof}

% Basically, we switched from using K / G for the complete graph / sparse graph in the main paper to
% G / H in the appendix. The easiest way to fix this is just to re-define the commands for the graphs G and H in the appendix.
\renewcommand{\G}{\mathsf{K}}
\renewcommand{\graphh}{\mathsf{G}}
\renewcommand{\graphg}{\mathsf{K}}

This section presents   the complete proof of Theorem~\ref{thm:meta}. 
Let $y_{i, 1}, \ldots, y_{i, L}$ be random variables which are equal to the indices of the $L$ points sampled for $x_i$.
Recall that by the \textsf{SZ} algorithm, the ``ideal'' sampling probability for $x_j$ from $x_i$ is
\[
    p_i(j) \triangleq \min \left\{ C \cdot \frac{\log(n)}{\lambda_{k+1}}\cdot \frac{k(x_i, x_j)}{\deg_\G(v_i)} , 1 \right\}.
\]
We   denote the actual sampling probability that $x_j$ is sampled from $x_i$
under Algorithm~\ref{alg:fsg} to be
\[
    \widetilde{p}_i(j) \triangleq \mathbb{P}\left[x_j \in \{y_{i, 1}, \ldots y_{i, L}\}\right].
\]
Finally, for each added edge, Algorithm~\ref{alg:fsg} also computes an estimate of $p_i(x_j)$ which we denote 
\[
    \widehat{p}_i(j) \triangleq \min \left\{6 C\cdot  \frac{ \log(n)}{\lambda_{k+1}}\cdot \frac{k(x_i, x_j)}{g_{[1,n]}(x_i)}, 1 \right\}.
\]
Similar with the definition of $p_e$ in Section~\ref{sec:cps}, we define
\begin{itemize}
    \item $p(i, j) = p_i(j) + p_j(i) - p_i(j)\cdot p_j(i)$,
    \item $\widetilde{p}(i, j) = \widetilde{p}_i(j) + \widetilde{p}_j(i) - \widetilde{p}_i(j) \cdot\widetilde{p}_j(i)$, and 
    \item $\widehat{p}(i, j) = \widehat{p}_i(j) + \widehat{p}_j(i) - \widehat{p}_i(j)\cdot \widehat{p}_j(i)$.
\end{itemize}
Notice that, following the convention of \cite{SZ19}, we use $p_i(j)$ to refer to the probability that a given edge is sampled \emph{from the vertex $x_i$} and $p(i, j)$ is the probability that the given edge $\{v_i, v_j\}$ is sampled at all by the algorithm.
We use the same convention for $\widetilde{p}_i(j)$ and $\widehat{p}_i(j)$.

We first prove a sequence of lemmas showing that these probabilities are all within a constant factor of each other.

\begin{lemma} \label{lem:edgeprob}
For any point $x_i$, the probability that a given sampled neighbour $y_{i, l}$ is equal to $j$ is given by
% \he{reformulate, since we use $y_{i,l}$ to refer to the index instead of the actual vertex.}
\[
    \frac{k(x_i, x_j)}{2 \deg_\G(v_i)} \leq  \mathbb{P}\left[y_{i, l} = j\right]  \leq \frac{2 k(x_i, x_j)}{\deg_\G(v_i)}.
\]
\end{lemma}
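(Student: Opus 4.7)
The plan is to unfold the binary search carried out by \samplealg\ into a product of conditional probabilities, recognise that in the noiseless case this product telescopes to the ideal value $k(x_i,x_j)/\deg_\G(v_i)$, and then show that the per-level $(1\pm\epsilon)$ KDE errors compound to at most a factor of $2$ in either direction thanks to the hypothesis $\epsilon \leq 1/(6\log n)$.

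More concretely, I would first introduce the descending chain of subsets visited by the recursion on behalf of $x_j$: put $X^{(0)} = X$ and, for each $\ell$ with $|X^{(\ell)}| > 1$, let $X^{(\ell+1)}$ be whichever of the two halves $X^{(\ell)}_1, X^{(\ell)}_2$ contains $x_j$; after $\log_2 n$ steps the recursion terminates at $X^{(\log_2 n)} = \{x_j\}$. Writing $\tilde{g}_S(x_i)$ for the KDE estimate used inside the algorithm, which satisfies $\tilde{g}_S(x_i) \in [(1-\epsilon)g_S(x_i),\ (1+\epsilon)g_S(x_i)]$, a single invocation of \samplealg\ returns $j$ with probability
\[
\mathbb{P}[y_{i,l} = j] \;=\; \prod_{\ell=0}^{\log_2 n - 1} \frac{\tilde{g}_{X^{(\ell+1)}}(x_i)}{\tilde{g}_{X^{(\ell)}_1}(x_i) + \tilde{g}_{X^{(\ell)}_2}(x_i)}.
\]

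Next I would replace each $\tilde{g}$ by the worst-case $(1\pm\epsilon)$ multiple of the exact $g$ and use the linearity identity $g_{X^{(\ell)}_1}(x_i) + g_{X^{(\ell)}_2}(x_i) = g_{X^{(\ell)}}(x_i)$. The product of the resulting ideal factors $g_{X^{(\ell+1)}}(x_i)/g_{X^{(\ell)}}(x_i)$ telescopes to $g_{\{x_j\}}(x_i)/g_{[1,n]}(x_i) = k(x_i,x_j)/\deg_\G(v_i)$, so
\[
\mathbb{P}[y_{i,l}=j] \in \left[\left(\frac{1-\epsilon}{1+\epsilon}\right)^{\log_2 n},\ \left(\frac{1+\epsilon}{1-\epsilon}\right)^{\log_2 n}\right] \cdot \frac{k(x_i,x_j)}{\deg_\G(v_i)}.
\]

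Finally I would discharge the compounded error using $\epsilon \leq 1/(6\log n)$: Bernoulli's inequality gives $(1-\epsilon)^{\log_2 n} \geq 1 - \epsilon\log_2 n$, while $(1+\epsilon)^{\log_2 n} \leq \exp(\epsilon\log_2 n)$, both of which lie within absolute constants of $1$ under the hypothesis on $\epsilon$. A direct numerical check then shows that the resulting ratios sit comfortably inside $[1/2, 2]$, giving the claimed two-sided bound. The main obstacle is simply keeping the per-level perturbations straight: the numerator $\tilde{g}_{X^{(\ell+1)}}$ and the two terms in the denominator at level $\ell$ come from \emph{independent} KDE calls, so one must not try to telescope the noisy estimates directly; instead one bounds each $\tilde{g}$ against the corresponding exact $g$ and lets the exact quantities telescope.
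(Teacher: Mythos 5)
Your proposal is correct and takes essentially the same route as the paper: unfold the binary search into a product of at most $\log_2 n$ conditional probabilities, bound each factor within $\bigl(\frac{1-\epsilon}{1+\epsilon}, \frac{1+\epsilon}{1-\epsilon}\bigr)$ of the exact degree ratio, telescope the exact ratios to $k(x_i,x_j)/\deg_\G(v_i)$, and use $\epsilon \leq 1/(6\log_2 n)$ to show the accumulated distortion lies in $[1/2,2]$. Your remark about the denominator being the \emph{sum} of two independent KDE calls (rather than a single KDE over the union) is a slightly more faithful reading of Algorithm~\ref{alg:fsgsample} than the paper's notation, but it leads to the same per-level bound since both quantities are $(1\pm\epsilon)$-approximations of $\deg_{[a,b]}(v_i)$.
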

\begin{proof}
Let
$X = \{x_1, \ldots, x_n\}$
be the input data points for Algorithm~\ref{alg:fsg}, and
$[n] = \{1, \ldots n\}$ be the indices of the input data points.
% $X = x_1, \ldots, x_n$
% be the input data points
% for Algorithm~\ref{alg:fsg},
% and
Furthermore, let $[a, b] = \{a, \ldots, b\}$ be the set of indices between $a$ and $b$.
% $X_{[a, b]} = \{x_a, \ldots, x_b\}$
% be the subset of $X$ with indices between $a$ and $b$.
Then, in each recursive call to Algorithm~\ref{alg:fsgsample}, we are given a range
of indices $[a, b]$
% $X_{[a, b]}$
as input and assign $y_{i, l}$ to one half of it: either
$[a, \lfloor b/2 \rfloor]$ or $[\lfloor b/a \rfloor + 1, b]$.
% $X_{[a, \lfloor b/2 \rfloor]}$ or $X_{[\lfloor b/2\rfloor + 1, b]}$.
By Algorithm~\ref{alg:fsgsample}, we have that the probability of assigning $y_{i, l}$ to
$[a, \lfloor b/2 \rfloor]$ is
\[
    \mathbb{P}\left[ y_{i, l} \in [a, \lfloor b/2 \rfloor ] ~|~ y_{i, l} \in [a, b] \right] = \frac{g_{[a, \lfloor b/2 \rfloor]}(x_i)}{g_{[a, b]}(x_i)}.
\]
% $X_{[a, \lfloor b/2 \rfloor ]}$ is
% \[
%     \mathbb{P}\left[ y_{i, l} \in X_{[a, \lfloor b/2 \rfloor ] } | y_{i, l} \in X_{[a, b]} \right] = \frac{g_{[a, \lfloor b/2 \rfloor]}(x_i)}{g_{[a, b]}(x_i)},
% \]
% where we write $g_{[a, b]}$ rather than $g_{X_{[a, b]}}$ for clarity.
By  the performance guarantee of the \textsc{KDE} algorithm, we have that $g_{[a, b]}(x_i) \in (1 \pm \epsilon) \deg_{[a, b]}(v_i)$, where we define \[
\deg_{[a, b]}(x_i) \triangleq \sum_{j = a}^b k(x_i, x_j).
\]
This gives 
\begin{equation} \label{eq:oneprob}
    \left(\frac{1 - \epsilon}{1 + \epsilon}\right) \frac{\deg_{[a, \lfloor b/2 \rfloor]}(v_i)}{\deg_{[a, b]}(v_i)} \leq \mathbb{P}\left[ y_{i,l} \in X_{[a, \lfloor b/2 \rfloor]} ~| y_{i,l} \in X_{[a, b]}\right] \leq \left(\frac{1 + \epsilon}{1 - \epsilon}\right) \frac{\deg_{[a, \lfloor b/2 \rfloor]}(v_i)}{\deg_{[a, b]}(v_i)}.
\end{equation}

Next, notice that we can write  for a sequence of intervals $[a_1,b_1]\subset[a_2, b_2]\subset\ldots\subset[1,n]$ that
% \begin{align*}
%    \mathbb{P}\left[ y_{i, l} = x_j \right] 
%    & =  \mathbb{P}\left[ y_{i, l} = x_j | y_{i, l} \in X_{[a_1, b_1]} \right] \times \mathbb{P} \left[ y_{i, l} \in X_{[a_1, b_1]} | y_{i, l} \in X_{[a_2, b_2]} \right] \\
%    & \qquad \qquad \times \ldots \times \mathbb{P} \left[ y_{i, l} \in X_{[a_k, b_k]} | y_{i, l} \in X_{[1, n]}\right],
% \end{align*}
\begin{align*}
   \mathbb{P}\left[ y_{i, l} = j \right] 
   & =  \mathbb{P}\left[ y_{i, l} = j | y_{i, l} \in [a_1, b_1] \right] \times \mathbb{P} \left[ y_{i, l} \in [a_1, b_1] | y_{i, l} \in [a_2, b_2] \right] \\
   & \qquad \qquad \times \ldots \times \mathbb{P} \left[ y_{i, l} \in [a_k, b_k] | y_{i, l} \in [1, n]\right],
\end{align*}
where each term corresponds to one level of recursion of Algorithm~\ref{alg:fsgsample} and there are at most $\lceil \log_2(n) \rceil$ terms.
Then, by \eqref{eq:oneprob}
and the fact that the denominator and numerator of adjacent terms cancel out, we have
\[
    \left(\frac{1 - \epsilon}{1 + \epsilon} \right)^{\lceil \log_2(n) \rceil} \frac{k(x_i, x_j)}{\deg_{\graphk}(v_i)}
    \leq \mathbb{P}\left[ y_{i, l} = j \right] \leq \left(\frac{1 + \epsilon}{1 - \epsilon}\right)^{\lceil \log_2(n)\rceil} \frac{k(x_i, x_j)}{\deg_{\graphk}(v_i)}
\]
since $\deg_{[j, j]}(v_i) = k(x_i, x_j)$ and $\deg_{[1, n]}(v_i) = \deg_{\graphk}(v_i)$.

For the lower bound, we have that
\begin{align*}
    \left(\frac{1 - \epsilon}{1 + \epsilon}\right)^{\lceil \log_2(n)\rceil} & \geq \left(1 - 2 \epsilon\right)^{\lceil \log_2(n)\rceil}  \geq 1 - 3 \log_2(n) \epsilon  \geq 1/2,
\end{align*}
where the final inequality follows by the condition of $\epsilon$ that $\epsilon \leq 1 / (6 \log_2(n))$. 

For the upper bound, we similarly have
\begin{align*}
   \left(\frac{1 + \epsilon}{1 - \epsilon}\right)^{\lceil \log_2(n)\rceil} & \leq  \left(1 + 3 \epsilon\right)^{\lceil \log_2(n)\rceil}  \leq \exp \left( 3 \lceil \log_2(n)\rceil \epsilon \right)  \leq \mathrm{e}^{2/3} \leq 2,
\end{align*}
where the first inequality follows since $\epsilon < 1 / (6\log_2(n))$.
\end{proof}

The next lemma shows that the probability of sampling each edge $\{v_i, v_j\}$ is approximately equal to the ``ideal'' sampling probability $p_i(j)$.

\begin{lemma} \label{lem:nodeprob}
    For every $i$ and $j \neq i$, we have
    \[
        \frac{9}{10}\cdot p_i(j) \leq \widetilde{p}_i(j) \leq 12\cdot p_i(j).
    \]
\end{lemma}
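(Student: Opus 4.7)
The plan is to reduce the question to bounds on the single-sample probability $\mathbb{P}[y_{i,l}=j]$ supplied by Lemma~\ref{lem:edgeprob}, and then to control the $L$-fold independent repetition via elementary estimates on $1-(1-q)^L$. Since the $L$ samples $y_{i,1},\dots,y_{i,L}$ for a fixed source $x_i$ are drawn independently, I can write
\[
\widetilde{p}_i(j) \;=\; 1-\bigl(1-q\bigr)^{L}, \qquad q \;\triangleq\; \mathbb{P}[y_{i,l}=j],
\]
and Lemma~\ref{lem:edgeprob} already gives $q\in[\,k(x_i,x_j)/(2\deg_\graphk(v_i)),\ 2k(x_i,x_j)/\deg_\graphk(v_i)\,]$. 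Recalling $L = 6C\log(n)/\lambda_{k+1}$, this yields the two key scalar inequalities
\[
Lq \;\leq\; 12\,C\cdot\frac{\log n}{\lambda_{k+1}}\cdot\frac{k(x_i,x_j)}{\deg_\graphk(v_i)}, \qquad
Lq \;\geq\; 3\,C\cdot\frac{\log n}{\lambda_{k+1}}\cdot\frac{k(x_i,x_j)}{\deg_\graphk(v_i)}.
\]

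Next I would split on whether the $\min$ in the definition of $p_i(j)$ is attained by $1$ or not. If $p_i(j)<1$, then $p_i(j) = C(\log n/\lambda_{k+1})\cdot k(x_i,x_j)/\deg_\graphk(v_i)$, so the two displayed bounds become $Lq\le 12\,p_i(j)$ and $Lq\ge 3\,p_i(j)$. The upper bound $\widetilde{p}_i(j)\le Lq\le 12\,p_i(j)$ follows from the union bound (equivalently Bernoulli's inequality $(1-q)^L\ge 1-Lq$). For the lower bound I use $(1-q)^L \le e^{-Lq}$ to write
\[
\widetilde{p}_i(j) \;\geq\; 1-e^{-Lq} \;\geq\; 1-e^{-3 p_i(j)},
\]
and then invoke the elementary inequality $1-e^{-3x}\ge \tfrac{9}{10}x$ for $x\in[0,1]$ with $x=p_i(j)$. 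The latter is established by noting $f(x)\triangleq 1-e^{-3x}-\tfrac{9}{10}x$ satisfies $f(0)=0$, $f'(0)=3-\tfrac{9}{10}>0$, $f$ is concave on $[0,1]$ (since $f''(x)=-9e^{-3x}<0$), and $f(1)=1-e^{-3}-\tfrac{9}{10}>0$; concavity plus nonnegativity at the endpoints gives $f\ge 0$ on $[0,1]$.

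In the remaining case $p_i(j)=1$, the upper bound is trivial ($\widetilde{p}_i(j)\le 1 = p_i(j) \le 12\,p_i(j)$). For the lower bound, the defining inequality $C(\log n/\lambda_{k+1})\cdot k(x_i,x_j)/\deg_\graphk(v_i)\ge 1$ gives $Lq\ge 3$, hence
\[
\widetilde{p}_i(j) \;\geq\; 1 - e^{-Lq} \;\geq\; 1-e^{-3} \;\geq\; \tfrac{9}{10} \;=\; \tfrac{9}{10}\,p_i(j).
\]

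The only real content is the scalar inequality $1-e^{-3x}\ge \tfrac{9}{10}x$ on $[0,1]$, which is the step I anticipate needing the most care; the rest is bookkeeping on top of Lemma~\ref{lem:edgeprob}, the chosen value of $L$, and independence of the $L$ samples. Note this is where the constants $6$ in $L$ and $9/10,\,12$ in the statement are calibrated: the factor $6$ gives $Lq\ge 3 p_i(j)$, which (via the calculus inequality) yields the constant $9/10$, while $Lq\le 12 p_i(j)$ yields the constant $12$.
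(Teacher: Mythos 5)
Your proposal is correct and follows essentially the same route as the paper's proof: reduce to the single-draw probability $q=\mathbb{P}[y_{i,l}=j]$ via Lemma~\ref{lem:edgeprob}, use $\widetilde{p}_i(j)=1-(1-q)^L$, bound above by the union bound and below by $1-e^{-Lq}$ together with a calculus inequality. The only cosmetic difference is the case split — you split on whether $p_i(j)=1$, while the paper splits at the threshold $p_i(j)\le 9/10$ and directly shows $1-e^{-3p_i(j)}\ge p_i(j)$ there — but the calibration of $L$ and the constants $9/10$, $12$ is identical and both arguments are sound.
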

\begin{proof}
    Let $Y = \{y_{i, 1}, \ldots, y_{i, L}\}$ be the neighbours of $x_i$ sampled by Algorithm~\ref{alg:fsg}, where $L = 6 C \log(n) / \lambda_{k + 1}$.
    Then,
    \begin{align*}
        \mathbb{P} \left[j \in Y\right] & = 1 - \prod_{l = 1}^{L} \left(1 - \mathbb{P}\left[ y_{i, l} = j \right]  \right)   \geq 1 - \left(1 - \frac{k(x_i, x_j)}{2 \deg_{\graphk}(v_i)}\right)^L  \geq 1 - \exp\left(- L \cdot\frac{k(x_i, x_j)}{2 \deg_{\graphk}(v_i)}\right)
    \end{align*}

    The proof proceeds by case distinction. 

    \textbf{Case 1:} $p_i(j) \leq 9/10$. In this case, we have
    \begin{align*}
        \mathbb{P} \left[j \in Y\right] & \geq 1 - \exp\left(- 6 p_i(j) / 2\right)   \geq p_i(j).
    \end{align*}

    \textbf{Case 2:} $p_i(j) > 9/10$. In this case, we have
    \begin{align*}
        \mathbb{P} \left[j \in Y\right] & \geq 1 - \exp\left(- \frac{9 \cdot 6}{20} \right) \geq \frac{9}{10},
    \end{align*}
    which completes the proof on the lower bound of $\widetilde{p}_i(j)$.

    For the upper bound, we have
    \begin{align*}
        \mathbb{P}\left[j \in Y\right] & \leq 1 - \left(1 - \frac{2 k(x_i, x_j)}{\deg_{\graphk}(v_i)}\right)^L \leq \frac{2 k(x_i, x_j)}{\deg_{\graphk}(v_i)} \cdot L = 12C\cdot \frac{\log(n)}{\lambda_{k+1}}\cdot\frac{k(x_i, x_j)}{\deg_{\graphk}(v_i)}, 
    \end{align*}
    from which the statement follows.
\end{proof}
An immediate corollary of Lemma~\ref{lem:nodeprob} is as follows.
\begin{corollary} \label{cor:probs}
For all different $i, j \in [n]$, it holds that 
\[
    \frac{9}{10}\cdot  p(i, j) \leq \widetilde{p}(i, j) \leq 12 \cdot p(i, j)
\]
and
% \[
%     \frac{1}{14}\cdot  \widetilde{p}(i, j) \leq \widehat{p}(i, j) \leq 8 \cdot \widetilde{p}(i, j).
% \]
\[
    \frac{6}{7} \cdot p(i, j) \leq \widehat{p}(i, j) \leq \frac{36}{5} \cdot p(i, j).
\]
\end{corollary}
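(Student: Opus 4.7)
My plan is to derive both pairs of inequalities from the common algebraic identity $a + b - ab = 1 - (1-a)(1-b)$, applied separately to $\widetilde{p}(i,j)$ and $\widehat{p}(i,j)$. For $\widetilde{p}$ I will use Lemma~\ref{lem:nodeprob} directly; for $\widehat{p}$ I will first need a short auxiliary step that converts the KDE accuracy guarantee into a per-vertex bound of the form $\widehat{p}_i(j) = \Theta(p_i(j))$.

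For the bound on $\widetilde{p}(i,j)$, I would write $\widetilde{p}(i,j) = 1 - (1-\widetilde{p}_i(j))(1-\widetilde{p}_j(i))$ and likewise for $p(i,j)$. For the lower bound, substituting $\widetilde{p}_i(j) \geq (9/10) p_i(j)$ and $\widetilde{p}_j(i) \geq (9/10) p_j(i)$ into each $(1 - \cdot)$ factor gives, after expansion, $\widetilde{p}(i,j) \geq (9/10)(p_i(j) + p_j(i)) - (9/10)^2 p_i(j) p_j(i)$, which is at least $(9/10) p(i,j)$ because $(9/10)^2 \leq 9/10$. For the upper bound I will case-split on the size of $p_i(j)$ and $p_j(i)$. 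If either exceeds $1/12$, then $p(i,j) \geq 1/12$ and the bound $\widetilde{p}(i,j) \leq 1 \leq 12\, p(i,j)$ is immediate. Otherwise $12 p_i(j), 12 p_j(i) \leq 1$, so the bounds $\widetilde{p}_i(j) \leq 12 p_i(j)$ may be substituted inside $1 - (1-\widetilde{p}_i(j))(1-\widetilde{p}_j(i))$ without producing negative factors, and expanding yields $\widetilde{p}(i,j) \leq 12(p_i(j)+p_j(i)) - 144\, p_i(j) p_j(i) \leq 12\, p(i,j)$.

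For the $\widehat{p}$ bound, the first step is to compare $\widehat{p}_i(j)$ with $p_i(j)$. Using $g_{[1,n]}(x_i) \in (1\pm\epsilon)\deg_{\graphk}(v_i)$ with $\epsilon \leq 1/(6\log n)$, so that $\epsilon \leq 1/6$, the ratio of the unclipped expressions inside the two $\min\{\cdot,1\}$ definitions is $6\cdot \deg_{\graphk}(v_i)/g_{[1,n]}(x_i) \in [6/(1+\epsilon),\, 6/(1-\epsilon)] \subseteq [36/7,\, 36/5]$. A short case analysis on which side of each $\min$ is active then establishes a pointwise bound $\widehat{p}_i(j) \in [c_1, 36/5] \cdot p_i(j)$ for a constant $c_1$ comfortably above $6/7$. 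Feeding these per-vertex bounds into $\widehat{p}(i,j) = 1-(1-\widehat{p}_i(j))(1-\widehat{p}_j(i))$ and running the same expansion-plus-case-split argument as above then delivers the stated $\tfrac{6}{7}p(i,j) \leq \widehat{p}(i,j) \leq \tfrac{36}{5}p(i,j)$.

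The main obstacle is the bookkeeping around the $\min\{\cdot,1\}$ clippings. The clean multiplicative bound $a \leq \alpha a'$ only translates to $(1-a) \geq (1-\alpha a')$ when the right-hand side is nonnegative, so one must verify in each regime either that the unclipped bounds are already $\leq 1$ (allowing direct substitution) or that $p(i,j)$ is large enough that the inequality is trivial from $\widetilde{p}(i,j),\widehat{p}(i,j)\leq 1$. No single clean substitution works globally; the virtue of the case-split is precisely that it covers both regimes uniformly.
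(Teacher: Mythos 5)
Your proposal is correct and follows essentially the same route as the paper: per-vertex bounds from Lemma~\ref{lem:nodeprob} (for $\widetilde{p}_i(j)$) and from the KDE multiplicative guarantee with $\epsilon \leq 1/6$ (for $\widehat{p}_i(j)$), combined with the monotonicity of $a+b-ab = 1-(1-a)(1-b)$ and a case-split on whether $p_i(j)$ or $p_j(i)$ exceeds $1/12$ (resp.\ $5/36$) to handle the $\min\{\cdot,1\}$ clipping. The only cosmetic difference is in how the per-vertex bound for $\widehat{p}_i(j)$ is organised (you compare the ratio of the unclipped quantities and case-analyse the clipping, which in fact yields the slightly tighter $c_1 = 1$; the paper instead drops the factor of $6$ and factors $1/(1+\epsilon)$ out of the $\min$, arriving at the looser but sufficient $6/7$), and this does not change the substance of the argument.
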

\begin{proof}
\newcommand{\tildep}{\widetilde{p}}
\newcommand{\hatp}{\widehat{p}}
    For the upper bound of the first statement, we can assume that $p_i(j) \leq 1/12$ and $p_j(i) \leq 1/12$, since otherwise we have $\tildep(i, j) \leq 1 \leq 12 \cdot p(i, j)$ and the statement holds trivially. 
    Then, by Lemma~\ref{lem:nodeprob}, we have
    \begin{align*}
        \tildep(i, j) 
        & = \tildep_i(j) + \tildep_j(i) - \tildep_i(j)\cdot \tildep_j(i) \\
        & \leq 12 p_i(j) + 12 p_j(i) - 12 p_i(j) \cdot 12 p_j(i) \\
        & \leq 12 \left(p_i(j) + p_j(i) - p_i(j) \cdot p_j(i)\right) \\
        & = 12 \cdot p(i,j)
    \end{align*}
    and 
    \begin{align*}
        \tildep(i, j) 
        & = \tildep_i(j) + \tildep_j(i) - \tildep_i(j)\cdot \tildep_j(i) \\
        & \geq \frac{9}{10}\cdot  p_i(j) + \frac{9}{10}\cdot  p_j(i) - \frac{9}{10} p_i(j) \cdot \frac{9}{10} p_j(i) \\
        & \geq \frac{9}{10} \left(p_i(j) + p_j(i) - p_i(j) p_j(i)\right) \\
        & = \frac{9}{10} \cdot p(i,j).
    \end{align*}
    For the second statement, notice that
    \begin{align*}
        \hatp_i(j)
        &= \min \left\{\frac{6 C \log(n)}{\lambda_{k+1}} \cdot \frac{k(i, j)}{g_{[1, n]}(x_i)}, 1  \right\} \\
        & \geq \min \left\{\frac{1}{1+\varepsilon} \frac{C \log(n)}{\lambda_{k+1}} \cdot \frac{k(i, j)}{\deg_{\graphk}(v_i)}, 1  \right\} \\
        & \geq \frac{1}{1+\varepsilon}\cdot \min \left\{\frac{C \log(n)}{\lambda_{k+1}} \cdot \frac{k(i, j)}{\deg_\graphk(v_i)}, 1  \right\} \\
        & = \frac{1}{1+\varepsilon}\cdot  p_i(j) \\
        & \geq \frac{6}{7}\cdot  p_i(j),
        % & \geq \frac{1}{14}\cdot  \tildep_i(j),
    \end{align*}
    since $g_{[1,n]}(x_i)$ is a $(1 \pm \varepsilon)$ approximation of $\deg_\graphk(v_i)$ and $\varepsilon \leq 1/6$.
    Similarly,
    \begin{align*}
        \hatp_i(j) \leq \frac{6}{1 - \varepsilon}\cdot  p_i(j) \leq \frac{36}{5}\cdot p_i(j).
        %\leq \frac{36 \cdot 10}{5 \cdot 9}\cdot  \tildep_i(j) = 8 \cdot \tildep_i(j).
    \end{align*}
    Then, for the upper bound of the second statement, we can assume that $p_i(j) \leq 5/36$ and $p_j(i) \leq 5/36$, since otherwise 
     $\hatp(i, j) \leq 1 \leq (36/5) \cdot \tildep(i, j)$ and the statement holds trivially. This implies that 
    \begin{align*}
        \hatp(i, j) 
        & = \hatp_i(j) + \hatp_j(i) - \hatp_i(j)\cdot \hatp_j(i) \\
        & \leq \frac{36}{5} p_i(j) + \frac{36}{5} p_j(i) - \frac{36}{5} p_i(j) \cdot \frac{36}{5} p_j(i) \\
        & \leq \frac{36}{5} \left(p_i(j) + p_j(i) - p_i(j)\cdot  p_j(i)\right) \\
        & = \frac{36}{5} \cdot p(i,j)
    \end{align*}
    and
    \begin{align*}
        \hatp(i, j) 
        & \geq \frac{6}{7} p_i(j) + \frac{6}{7} p_j(i) - \frac{6}{7} p_i(j) \cdot \frac{6}{7} p_j(i) \\
        & \geq \frac{6}{7} \left(p_i(j) + p_j(i) - p_i(j)\cdot  p_j(i)\right) \\
        & = \frac{6}{7} \cdot p(i,j),
    \end{align*}
    which completes the proof.
    % Then, for the upper bound of the second statement, we can assume that $\tildep_i(j) \leq 1/8$ and $\tildep_j(i) \leq 1/8$, since otherwise 
    %  $\hatp(i, j) \leq 1 \leq 8 \cdot \tildep(i, j)$ and the statement holds trivially. This implies that 
    % \begin{align*}
    %     \hatp(i, j) 
    %     & = \hatp_i(j) + \hatp_j(i) - \hatp_i(j)\cdot \hatp_j(i) \\
    %     & \leq 8 \tildep_i(j) + 8 \tildep_j(i) - 8 \tildep_i(j) \cdot 8 \tildep_j(i) \\
    %     & \leq 8 \left(\tildep_i(j) + \tildep_j(i) - \tildep_i(j)\cdot  \tildep_j(i)\right) \\
    %     & = 8 \cdot \tildep(i,j)
    % \end{align*}
    % and
    % \begin{align*}
    %     \hatp(i, j) 
    %     & \geq \frac{1}{14} \tildep_i(j) + \frac{1}{14} \tildep_j(i) - \frac{1}{14} \tildep_i(j) \cdot \frac{1}{14} \tildep_j(i) \\
    %     & \geq \frac{1}{14} \left(\tildep_i(j) + \tildep_j(i) - \tildep_i(j)\cdot  \tildep_j(i)\right) \\
    %     & = \frac{1}{14} \cdot \tildep(i,j),
    % \end{align*}
    % which completes the proof.
\end{proof}

We are now ready to prove  Theorem~\ref{thm:meta}. It is important to note that, although some of the proofs below are parallel to that of \cite{SZ19}, our analysis needs to carefully take into account the approximation ratios introduced by the 
approximate \textsc{KDE} algorithm, which  makes our analysis more involved. 
The following concentration inequalities will be used in our proof.

\begin{lemma}[Bernstein's Inequality~\cite{chung2006concentration}]
Let $X_1, \ldots, X_n$ be independent random variables such that $|X_i| \leq M$ for any $i \in \{1, \ldots, n\}$.
Let $X = \sum_{i = 1}^n X_i$, and  $R = \sum_{i = 1}^n \mathbb{E}\left[ X_i^2 \right]$.
Then, it holds that
\[
    \mathbb{P}\left[|X - \mathbb{E}\left[ X \right] | \geq t\right] \leq 2 \exp\left(-\frac{t^2}{2(R + Mt/3)}\right).
\]
\end{lemma}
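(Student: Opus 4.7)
The plan is to use the standard Chernoff-style argument via the moment generating function. First I would reduce to the one-sided tail: by applying the same argument to $-X_i$ in place of $X_i$, a bound on $\mathbb{P}[X - \mathbb{E}[X] \geq t]$ yields the two-sided inequality up to the factor of $2$. Then, by replacing $X_i$ with $X_i - \mathbb{E}[X_i]$, I may assume without loss of generality that $\mathbb{E}[X_i] = 0$ for every $i$, noting that the centred variables satisfy $|X_i - \mathbb{E}[X_i]| \leq 2M$; a cleaner route is to keep the centring implicit and work with $Y_i = X_i - \mathbb{E}[X_i]$ treating its variance via $\mathbb{E}[Y_i^2] \leq \mathbb{E}[X_i^2]$.

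Next I would apply Markov's inequality to $e^{\lambda(X - \mathbb{E}[X])}$ for an optimisation parameter $\lambda > 0$, giving
\[
\mathbb{P}[X - \mathbb{E}[X] \geq t] \leq e^{-\lambda t}\prod_{i=1}^n \mathbb{E}\bigl[e^{\lambda(X_i - \mathbb{E}[X_i])}\bigr],
\]
using independence to factor the moment generating function. The heart of the argument is bounding each factor: writing out the Taylor series and using $|X_i| \leq M$ to estimate $\mathbb{E}[|X_i|^k] \leq M^{k-2}\mathbb{E}[X_i^2]$ for $k \geq 2$, one obtains
\[
\mathbb{E}\bigl[e^{\lambda(X_i - \mathbb{E}[X_i])}\bigr] \leq \exp\!\Bigl(\mathbb{E}[X_i^2]\cdot \tfrac{e^{\lambda M} - 1 - \lambda M}{M^2}\Bigr).
\]
Multiplying over $i$ collects the variances into $R$, yielding a bound of the form $\exp(-\lambda t + R(e^{\lambda M} - 1 - \lambda M)/M^2)$.

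The technical step that requires care is converting this bound into the clean form with $R + Mt/3$ in the denominator. Here I would use the elementary inequality $e^x - 1 - x \leq x^2/\bigl(2(1 - x/3)\bigr)$, valid for $0 \leq x < 3$, which gives
\[
\mathbb{P}[X - \mathbb{E}[X] \geq t] \leq \exp\!\Bigl(-\lambda t + \frac{\lambda^2 R}{2(1 - \lambda M/3)}\Bigr),
\]
valid for $\lambda < 3/M$. Finally I would choose $\lambda = t/(R + Mt/3)$, which satisfies the constraint and, upon substitution, produces the stated exponent $-t^2/\bigl(2(R + Mt/3)\bigr)$. Doubling for the two-sided bound completes the argument. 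The main obstacle is the MGF estimate together with the subsequent algebraic manipulation to isolate the $R + Mt/3$ term; once that is in hand, the optimisation over $\lambda$ is routine.
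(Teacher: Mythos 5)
The paper does not prove this lemma; it is quoted verbatim from Chung and Lu's survey on concentration inequalities and used as a black box, so there is no in-paper proof to compare against. Your proposal gives a correct self-contained proof via the standard Chernoff/moment-generating-function route, and the calculation checks out: the term-by-term comparison $k! \geq 2\cdot 3^{k-2}$ validates $e^x - 1 - x \leq x^2/\bigl(2(1 - x/3)\bigr)$ for $0 \leq x < 3$, and the choice $\lambda = t/(R + Mt/3)$ satisfies $\lambda M < 3$ and produces exactly the stated exponent.

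One step is stated a bit loosely and is worth making precise. Naively centering, i.e.\ replacing $X_i$ by $Y_i = X_i - \mathbb{E}[X_i]$ and then running the Taylor estimate on $Y_i$, only gives $|Y_i| \leq 2M$, which would degrade the constant. The way your displayed MGF bound actually achieves $M$ rather than $2M$ is by expanding $\mathbb{E}\bigl[e^{\lambda X_i}\bigr]$ itself (to which $|X_i|\leq M$ applies directly), getting
\[
\mathbb{E}\bigl[e^{\lambda X_i}\bigr] \;\leq\; 1 + \lambda\mathbb{E}[X_i] + \frac{\mathbb{E}[X_i^2]}{M^2}\bigl(e^{\lambda M} - 1 - \lambda M\bigr) \;\leq\; \exp\!\Bigl(\lambda\mathbb{E}[X_i] + \frac{\mathbb{E}[X_i^2]}{M^2}\bigl(e^{\lambda M} - 1 - \lambda M\bigr)\Bigr),
\]
and then multiplying by the centering prefactor $e^{-\lambda\mathbb{E}[X_i]}$, which cancels the linear term. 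You should drop the ``assume WLOG $\mathbb{E}[X_i]=0$'' framing and present the cancellation explicitly, since with this version of the hypothesis ($|X_i|\leq M$, $R = \sum\mathbb{E}[X_i^2]$, rather than $|X_i - \mathbb{E}[X_i]|\leq M$, $R = \sum\operatorname{Var}(X_i)$) the bound is not invariant under translation of the $X_i$.
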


\begin{lemma}[Matrix Chernoff Bound~\cite{tropp2012user}]
    Consider a finite sequence $\{X_i\}$ of independent, random, PSD matrices of dimension $d$ that satisfy $\|X_i\| \leq R$.
    Let $\mu_{\mathrm{min}} \triangleq \lambda_{\mathrm{min}}(\mathbb{E}\left[ \sum_i X_i\right])$
    and $\mu_{\mathrm{max}} \triangleq \lambda_{\mathrm{max}}(\mathbb{E}\left[ \sum_i X_i\right])$.
    Then, it holds that
    \[
        \mathbb{P}\left[ \lambda_{\mathrm{min}}\left(\sum_i X_i\right) \leq (1 - \delta) \mu_{\mathrm{min}}  \right] \leq d \left( \frac{\mathrm{e}^{-\delta}}{(1 - \delta)^{1-\delta}}\right)^{\mu_{\mathrm{min}} / R}
    \]
    for $\delta \in [0, 1]$, and
    \[
        \mathbb{P}\left[ \lambda_{\mathrm{max}}\left(\sum_i X_i\right) \geq (1 + \delta) \mu_{\mathrm{max}}  \right] \leq d \left( \frac{\mathrm{e}^{\delta}}{(1 + \delta)^{1+\delta}}\right)^{\mu_{\mathrm{max}} / R}
    \]
    forr $\delta \geq 0$.
\end{lemma}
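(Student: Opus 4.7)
The plan is to prove this via the matrix Laplace transform method, a matrix analog of the classical Chernoff/Cram\'er approach. The starting observation is that for any Hermitian random matrix $Y$ and any $\theta > 0$, by Markov's inequality applied to $\mathrm{tr}(e^{\theta Y})$ together with the bound $e^{\theta \lambda_{\max}(Y)} = \lambda_{\max}(e^{\theta Y}) \leq \mathrm{tr}(e^{\theta Y})$ for PSD matrices, we get
\[
    \mathbb{P}\bigl[\lambda_{\max}(Y) \geq t\bigr] \leq e^{-\theta t}\, \mathbb{E}\!\left[\mathrm{tr}(e^{\theta Y})\right].
\]
An analogous bound with $\theta < 0$ will give the lower-tail estimate for $\lambda_{\min}$. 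I will apply this to $Y = \sum_i X_i$ and devote the rest of the argument to controlling the matrix moment generating function $\mathbb{E}[\mathrm{tr}(\exp(\theta \sum_i X_i))]$.

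The central difficulty, and the main obstacle of the proof, is that for non-commuting matrices one cannot simply write $\mathbb{E}[\exp(\theta \sum_i X_i)] = \prod_i \mathbb{E}[\exp(\theta X_i)]$ as in the scalar case. To bypass this I will invoke Lieb's concavity theorem, which states that for any fixed Hermitian matrix $H$, the map $A \mapsto \mathrm{tr}\,\exp(H + \log A)$ is concave over positive definite $A$. Applying this iteratively and using Jensen's inequality, one obtains the master inequality
\[
    \mathbb{E}\!\left[\mathrm{tr}\!\left(\exp\!\Bigl(\theta \sum_i X_i\Bigr)\right)\right] \leq \mathrm{tr}\!\left(\exp\!\Bigl(\sum_i \log \mathbb{E}[e^{\theta X_i}]\Bigr)\right).
\]
The induction unfolds one matrix at a time by conditioning on $X_1, \ldots, X_{i-1}$, absorbing $\theta \sum_{j < i} X_j + \log A$ into the role of $H + \log A$ in Lieb's theorem, and taking expectation over $X_i$.

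Next I will bound each $\log \mathbb{E}[e^{\theta X_i}]$ using the assumption $0 \preceq X_i \preceq R\,\mathbf{I}$. From the scalar convexity inequality $e^{\theta x} \leq 1 + \frac{e^{\theta R}-1}{R} x$ on $[0,R]$, functional calculus yields the operator inequality $e^{\theta X_i} \preceq \mathbf{I} + g(\theta)\, X_i$ with $g(\theta) \triangleq (e^{\theta R}-1)/R$. Taking expectation and then using the scalar inequality $\log(1+x) \leq x$ (which transfers to the operator inequality $\log(\mathbf{I} + A) \preceq A$ for $A \succ -\mathbf{I}$ by monotonicity of $\log$) gives
\[
    \log \mathbb{E}[e^{\theta X_i}] \preceq g(\theta)\, \mathbb{E}[X_i].
\]
Summing over $i$ and using monotonicity of the trace exponential yields
\[
    \mathrm{tr}\!\left(\exp\!\Bigl(\sum_i \log \mathbb{E}[e^{\theta X_i}]\Bigr)\right) \leq \mathrm{tr}\!\left(\exp\!\bigl(g(\theta)\, \mathbb{E}[\textstyle\sum_i X_i]\bigr)\right) \leq d \cdot \exp(g(\theta) \mu_{\max}),
\]
where the final inequality uses $\mathrm{tr}\,e^{M} \leq d\, e^{\lambda_{\max}(M)}$ for PSD $M$.

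Combining these pieces, I obtain
\[
    \mathbb{P}\!\left[\lambda_{\max}(\textstyle\sum_i X_i) \geq (1+\delta) \mu_{\max}\right] \leq d \cdot \exp\!\bigl(-\theta (1+\delta)\mu_{\max} + g(\theta) \mu_{\max}\bigr).
\]
The final step is to optimize over $\theta > 0$. Setting $\theta = R^{-1} \log(1+\delta)$ collapses the exponent, after a short computation with $g(\theta)/R = \delta/R$, into $(\mu_{\max}/R) \cdot (\delta - (1+\delta)\log(1+\delta))$, which is precisely $\log\!\bigl(e^{\delta}/(1+\delta)^{1+\delta}\bigr)^{\mu_{\max}/R}$, giving the stated upper-tail bound. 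The lower-tail statement is obtained symmetrically by running the same argument with $\theta < 0$, replacing $\lambda_{\max}$ and the upper bound $e^{\theta x} \leq 1 + \frac{e^{\theta R}-1}{R} x$ with the analogous statement on $\lambda_{\min}$ and noting that the $\inf$ of $\theta$ is now achieved at $\theta = R^{-1} \log(1-\delta) < 0$; the optimization there produces the asymmetric factor $e^{-\delta}/(1-\delta)^{1-\delta}$.
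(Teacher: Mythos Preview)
Your proposal is correct and reproduces the standard proof from the cited reference~\cite{tropp2012user}. Note, however, that the paper itself does not prove this lemma: it is stated as a tool, attributed to Tropp, and invoked without proof in the analysis of Theorem~\ref{thm:meta}. So there is no ``paper's own proof'' to compare against; what you have written is essentially the argument in Tropp's paper (matrix Laplace transform, Lieb's concavity to obtain the subadditivity-of-cumulants master inequality, the convexity bound $e^{\theta x} \leq 1 + \tfrac{e^{\theta R}-1}{R}x$ on $[0,R]$, and the Chernoff optimization $\theta = R^{-1}\log(1\pm\delta)$).

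Two small cosmetic points. First, the inequality $\mathrm{tr}\,e^M \leq d\, e^{\lambda_{\max}(M)}$ holds for any Hermitian $M$, not only PSD; you implicitly rely on this in the lower-tail branch where $g(\theta)\,\mathbb{E}[\sum_i X_i]$ is negative semidefinite. Second, in the lower-tail case the relevant extremal eigenvalue of $g(\theta)\,\mathbb{E}[\sum_i X_i]$ (with $g(\theta)<0$) is $g(\theta)\,\mu_{\min}$, which is what produces the $\mu_{\min}/R$ exponent; it would be worth stating this explicitly rather than leaving it as ``symmetric''. Neither point is a gap in the argument.
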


\begin{proof}[Proof of Theorem~\ref{thm:meta}]
We first show that the degrees of all the vertices  in the similarity graph $\graphg$ are preserved with high probability in the sparsifier $\graphh$. 
For any vertex  $v_i$, and   let $y_{i,1}, \ldots, y_{i,L}$ be the indices of the neighbours of $v_i$ sampled by Algorithm~\ref{alg:fsg}.

For every pair of indices $i \neq j$, and for every $1 \leq l \leq L$, we define the random variable $Z_{i, j, l}$ to be the weight of the sampled edge if $j$ is the neighbour sampled from $i$ at iteration $l$, and $0$ otherwise:
% We fix an arbitrary vertex $x_i$, and let $Y_1, \ldots, Y_L$ be random variables defined by
% \blue{
% \[
%     Y_l \triangleq \frac{k(x_i, x_j)}{\widehat{p}(i, j)}~\mbox{ where }~j=y_{i.l},
% \]
% }
% for any $1\leq l\leq L$.
% Furthermore, for every $j \neq i$, we define the random variable $\blue{Z_{j, i, l}}$ by 
\[
    Z_{i, j, l} \triangleq \twopartdefow{\frac{k(x_i, x_j)}{\widehat{p}(i, j)}}{y_{i, l} = j}{0}
\]
% \[
%     Z_{j, i} \triangleq \twopartdefow{\frac{k(x_i, x_j)}{\widehat{p}(i, j)}}{i \in \{y_{j, 1} \ldots y_{j, L}\},}{0}
% \]
Then, fixing an arbitrary vertex $x_i$, we can write
\[
\deg_\graphh(v_i) = \sum_{l = 1}^L \sum_{i \neq j} Z_{i, j, l} + Z_{j, i, l}.
\]
% \[
% \deg_\graphh(x_i) = \sum_{l = 1}^L Y_l + \sum_{j \neq i} Z_{j, i}.
% \]
We have
\begin{align*}
\mathbb{E}\left[ \deg_\graphh(v_i) \right] & = \sum_{l = 1}^L \sum_{j \neq i} \mathbb{E}\left[ Z_{i, j, l}\right] +  \mathbb{E}\left[ Z_{j, i, l} \right] \\
& = \sum_{l = 1}^L \sum_{j \neq i} 
\left[ \mathbb{P}\left[y_{i, l} = j\right] \cdot \frac{k(x_i, x_j)}{\widehat{p}(i, j)} + \mathbb{P}\left[y_{j, l} = i\right] \cdot \frac{k(x_i, x_j)}{\widehat{p}(i, j)} \right].
\end{align*}
% \begin{align*}
% \mathbb{E}\left[ \deg_\graphh(x_i) \right] & = \sum_{l = 1}^L \mathbb{E}\left[ Y_l \right] + \sum_{j \neq i} \mathbb{E}\left[ Z_{j, i} \right] \\
% & = \sum_{l = 1}^L \sum_{j \neq i} \mathbb{P}\left[\blue{y_{i, l} = j}\right] \cdot \frac{k(x_i, x_j)}{\widehat{p}(i, j)} + \sum_{j \neq i} \widetilde{p}_j(i) \cdot \frac{k(x_i, x_j)}{\widehat{p}(i, j)}.
% \end{align*}
By Lemmas~\ref{lem:edgeprob} and \ref{lem:nodeprob} and Corollary~\ref{cor:probs}, we have
\begin{align*}
\mathbb{E}\left[ \deg_\graphh(v_i) \right] & \geq \sum_{j \neq i} \frac{k(x_i, x_j)}{\widehat{p}(i, j)} \left( \frac{L \cdot k(x_i, x_j)}{2 \deg_\graphk(v_i)} + \frac{L \cdot k(x_i, x_j)}{2 \deg_{\graphk}(v_j)} \right) \\
& = \sum_{i \neq j} \frac{3 k(x_i, x_j)}{\widehat{p}(i, j)} \left(p_i(j) + p_j(i)\right) \\
& \geq \sum_{i \neq j} \frac{5 k(x_i, x_j)}{12} = \frac{5 \deg_\graphk(v_i)}{12},
\end{align*}
% \begin{align*}
% \mathbb{E}\left[ \deg_\graphh(x_i) \right] & \geq \sum_{j \neq i} \frac{k(x_i, x_j)}{\widehat{p}(i, j)} \left( \frac{L \cdot k(x_i, x_j)}{2 \deg_\graphk(x_i)} + \widetilde{p}_j(i) \right) \\
% & \geq \sum_{j \neq i} \frac{k(x_i, x_j)}{4 \cdot \widehat{p}(i, j)} \cdot \left(\widetilde{p}_i(j)  + \widetilde{p}_j(i) \right) \\
% & \geq \sum_{j \neq i} \frac{k(x_i, x_j)}{\red{32}} = \frac{\deg_\G(x_i)}{\red{32}},
% \end{align*}
where the last inequality follows by the fact that 
$
\widehat{p}(i,j) \leq (36/5) \cdot  p(i,j)  \leq (36/5) \cdot \left( p_i(j) + p_j(i)\right)
$.
% $
% \widehat{p}(i,j) \leq 8\cdot  \widetilde{p}(i,j)  \leq 8\cdot \left( \widetilde{p}_i(j) + \widetilde{p}_j(i)\right)
% $.
Similarly, we have
% \blue{
% \begin{align*}
% \mathbb{E}\left[ \deg_\graphh(x_i) \right] & \leq \sum_{j \neq i} 28 \cdot k(x_i, x_j) = 28 \cdot \deg_\graphk(x_i),
% \end{align*}
% }
\begin{align*}
\mathbb{E}\left[ \deg_\graphh(v_i) \right] & \leq \sum_{j \neq i} \frac{k(x_i, x_j)}{\widehat{p}(i, j)} \left( \frac{2 \cdot L \cdot k(x_i, x_j)}{\deg_\graphk(v_i)} + \frac{2 \cdot L \cdot k(x_i, x_j)}{\deg_\graphk(v_j)} \right) \\
& = \sum_{j \neq i} \frac{12 \cdot k(x_i, x_j)}{\widehat{p}(i, j)} \left(p_i(j)  + p_j(i) \right) \\
& \leq \sum_{j \neq i} 28 \cdot k(x_i, x_j) = 28 \cdot \deg_\G(v_i),
\end{align*}
where the inequality follows by the fact that
\[
\widehat{p}(i,j) \geq \frac{6}{7} \cdot p(i,j) = \frac{6}{7}\cdot\left( p_i(j) + p_j(i)-  p_i(j) \cdot  p_j(i)\right) \geq \frac{3}{7}\cdot\left( p_i(j) + p_j(i)\right).
\]
% \begin{align*}
% \mathbb{E}\left[ \deg_\graphh(x_i) \right] & \leq \sum_{j \neq i} \frac{k(x_i, x_j)}{\widehat{p}(i, j)} \left( \frac{2 \cdot L \cdot k(x_i, x_j)}{\deg_\graphk(x_i)} + \widetilde{p}_j(i) \right) \\
% & \leq \sum_{j \neq i} \frac{40 \cdot k(x_i, x_j)}{3 \cdot \widehat{p}(i, j)} \left(\widetilde{p}_i(j)  + \widetilde{p}_j(i) \right) \\
% & \leq \sum_{j \neq i} \red{374} \cdot k(x_i, x_j) =\red{374} \cdot \deg_\G(x_i),
% \end{align*}
% where the inequality follows by the fact that
% \[
% \widehat{p}(i,j) \geq \frac{1}{14} \cdot \widetilde{p}(i,j) = \frac{1}{14}\cdot\left( \widetilde{p}_i(j) + \widetilde{p}_j(i)-  \widetilde{p}_i(j) \cdot  \widetilde{p}_j(i)\right) \geq \frac{1}{28}\cdot\left( \widetilde{p}_i(j) + \widetilde{p}_j(i)\right).
% \]
In order to prove a concentration bound on this degree estimate, we would like to apply the Bernstein inequality for which we need to bound
\begin{align*}
    R & = \sum_{l = 1}^L \sum_{j \neq i} \mathbb{E}\left[ Z_{i,j,l}^2\right] + \mathbb{E}\left[ Z_{j,i, l}^2\right] \\
    & = \sum_{l = 1}^L \sum_{j \neq i} \mathbb{P}\left[y_{i, l} = j\right] \cdot \frac{k(x_i, x_j)^2}{\widehat{p}(i, j)^2} + \mathbb{P}\left[y_{j, l} = i\right] \cdot \frac{k(x_i, x_j)^2}{\widehat{p}(i, j)^2} \\
    & \leq \sum_{j \neq i} \frac{12 \cdot k(x_i, x_j)^2}{\widehat{p}(i, j)^2}\cdot  \left(p_i(j) + p_j(i)\right) \\
    & \leq \sum_{j \neq i} 28 \cdot \frac{k(x_i, x_j)^2}{\widehat{p}(i, j)} \\
    & \leq \sum_{j \neq i} \frac{98}{3} \cdot \frac{k(x_i, x_j)^2}{p_i(j)} \\
    % & = \sum_{j : p_i(x_j) = 1} 6720 \cdot k(x_i, x_j)^2 + \sum_{j : p_i(x_j) < 1} 6720 \cdot \frac{k(x_i, x_j)^2}{p_i(x_j)} \\
    & = \sum_{j \neq i} \frac{98}{3} \cdot \frac{k(x_i, x_j)\cdot \deg_\G(v_i)\cdot  \lambda_{k + 1}}{C \log(n)} \\
    & = \frac{98 \cdot \deg_\G(v_i)^2 \cdot \lambda_{k + 1}}{3 \cdot C \log(n)},
\end{align*}
where the third inequality follows by the fact that 
\[
\widehat{p}(i,j) \geq \frac{6}{7}\cdot p(i,j) \geq \frac{6}{7}\cdot  p_i(j).
\]
Then, by applying Bernstein's inequality we have for any constant $C_2$ that
\begin{align*}
    \mathbb{P}\left[ \left|\deg_\graphh(v_i) - \mathbb{E}[\deg_\graphh(v_i)]\right| \geq \frac{1}{C_2} \deg_\graphg(v_i) \right] & \leq 2 \exp\left(- \frac{\deg_{\graphg}(v_i)^2 / (2\cdot C_2^2)}{\frac{98  \deg_\graphg(v_i)^2 \lambda_{k+1}}{3 C \log(n)} +  \frac{7 \deg_\graphg(v_i)^2 \lambda_{k+1}}{6 C C_2 \cdot \log(n)}}  \right) \\
    & \leq 2 \exp\left( - \frac{C\cdot  \log(n)}{((196/3) \cdot C_2^2 + (7/3) \cdot C_2)\cdot \lambda_{k+1}}  \right) \\
    & = o(1/n),
\end{align*}
where we use the fact that
\[
Z_{i, j, l} \leq \frac{7 k(x_i, x_j)}{6 p_i(j)} = \frac{7 \deg_\graphk(v_i)\cdot  \lambda_{k+1}}{6 C \cdot \log(n)}.
\]
Therefore, by taking $C$ to be sufficiently large and by the union bound, it holds with high probability that the degree of all the nodes in $\graphh$ are preserved up to a constant factor.
For the remainder of the proof, we   assume that this is the case.
Note in particular that this implies $\vol_\graphh(S) = \Theta(\vol_\graphg(S))$ for any subset $S \subseteq V$.

Next, we prove it holds for $\graphh$ that  
$    \phi_\graphh(S_i) = O\left(k \cdot \phi_\G(S_i)\right)$
for any $1 \leq i \leq k$, where $S_1, \ldots,S_k$ form an optimal clustering in $\G$.

By the definition of $Z_{i, j, l}$, it holds for any $1 \leq i \leq k$ that
\begin{align*}
\mathbb{E}\left[\partial_\graphh(S_i)\right] & = \mathbb{E}\left[ \sum_{a \in S_i} \sum_{b \not\in S_i} \sum_{l = 1}^L Z_{a, b, l} + Z_{b, a, l} \right] \\
    & \leq \sum_{a \in S_i} \sum_{b \not\in S_i} \frac{12 k(x_a, x_b)}{\widehat{p}(a, b)}\cdot \left(p_a(b) + p_b(a)\right) \\
    & = O\left(\partial_\G(S_i)\right)
\end{align*}
% \begin{align*}
% \mathbb{E}\left[\partial_\graphh(S_i)\right] & = \mathbb{E}\left[ \sum_{j \in S_i} \sum_{l \not\in S_i} Z_{l, j} + Z_{j, l} \right] \\
%     & = \sum_{j \in S_i} \sum_{l \not\in S_i} \frac{k(x_j, x_l)}{\widehat{p}(j, l)}\left(\widetilde{p}(j,l) + \widetilde{p}(l,j)\right) \\
%     & = O\left(\partial_\G(S_i)\right)
% \end{align*}
where the last line follows by Corollary~\ref{cor:probs}.
By Markov's inequality and the union bound, with constant probability it holds for all $i = 1, \ldots, k$ that
\[
    \partial_\graphh(S_i) = O(k \cdot \partial_\G(S_i)).
\]
Therefore, it holds with constant probability that
\[
    \rho_\graphh(k) \leq \max_{1 \leq i \leq k} \phi_\graphh(S_i) = \max_{1 \leq i \leq k} O(k \cdot \phi_\G(S_i)) = O(k \cdot \rho_\G(k)).
\]
Finally, we prove that 
$\lambda_{k+1}(\mathbf{N}_\graphh) = \Omega(\lambda_{k+1}(\mathbf{N}_\G))$. Let $\overline{\mathbf{N}}_\G$ be the projection of $\mathbf{N}_\G$ on its top $n - k$ eigenspaces, and notice that $\overline{\mathbf{N}}_\G$ can be written
\[
    \overline{\mathbf{N}}_\G = \sum_{i = k+1}^n \lambda_i(\mathbf{N}_\graphk) f_i f_i^\intercal
\]
where $f_1, \ldots, f_n$ are the eigenvectors of $\mathbf{N}_\G$.
Let $\overline{\mathbf{N}}_\G^{-1/2}$ be the square root of the pseudoinverse of $\overline{\mathbf{N}}_\G$.

We   prove that the top $n - k$ eigenvalues of $\mathbf{N}_\G$ are preserved, which implies that $\lambda_{k+1}(\mathbf{N}_\mathsf{G}) = \Theta(\lambda_{k+1}(\mathbf{N}_\graphk))$.
To prove this, for each data point $x_i$ and sample $1 \leq l \leq L$, we define a random matrix $X_{i, l} \in \mathbb{R}^{n \times n}$ by
\[
    X_{i, l} = w_{\graphh}(v_i, v_j) \cdot \overline{\mathbf{N}}_\G^{-1/2} b_e b_e^\intercal \overline{\mathbf{N}}_\G^{-1/2};
\]
where $j = y_{i, l}$,  
$b_e \triangleq \chi_{v_i} - \chi_{v_j}$ is the  edge indicator vector, and 
 $x_{v_i}\in\mathbb{R}^n$ is defined by
\[
    \chi_{v_i}(a) \triangleq \twopartdefow{\frac{1}{\sqrt{\deg_\graphk(v_i)}}}{a = i}{0}
\]

Notice that
% \he{We need to mention very clearly that $\chi_u$ is the normalised indicator vector defined with respect to $\graphk$. I don't think it's mentioned anywhere at the moment.}
\[
    \sum_{i = 1}^n \sum_{l = 1}^L X_{i, l} = \sum_{\text{sampled edges } e = \{v_i, v_j\}} w_{\graphh}(v_i, v_j) \cdot \overline{\mathbf{N}}_\G^{-1/2} b_e b_e^\intercal \overline{\mathbf{N}}_\G^{-1/2} = \overline{\mathbf{N}}_\G^{-1/2} \mathbf{N}_\graphh^{'} \overline{\mathbf{N}}_\G^{-1/2}
\]
where
\[
    \mathbf{N}_\graphh^{'} = \sum_{\text{sampled edges } e =\{v_i,v_j\}} w_{\graphh}(v_i, v_j) \cdot b_e b_e^\intercal
\]
is the Laplacian matrix of $\graphh$ normalised with respect to the degrees of the nodes in $\G$. 
We   prove that, with high probability, the top $n - k$ eigenvectors of $\mathbf{N}_\graphh^{'}$ and $\mathbf{N}_\G$ are approximately the same.
Then, we   show the same for $\mathbf{N}_\graphh$ and $\mathbf{N}_\graphh^{'}$ which implies that $\lambda_{k+1}(\mathbf{N}_\graphh) = \Omega(\lambda_{k+1}(\mathbf{N}_\G))$.

We begin by looking at the first moment of $\sum_{i = 1}^n \sum_{l = 1}^L X_{i, l}$, and have that  
\begin{align*}
    \lambda_{\mathrm{min}} \left(\mathbb{E}\left[ \sum_{i = 1}^n \sum_{l = 1}^L X_{i, l} \right]\right) & = \lambda_{\mathrm{min}}\left(\sum_{i = 1}^n \sum_{l = 1}^L \sum_{\substack{j\neq i\\  e= \{v_i, v_j\}} } \mathbb{P}\left[y_{i, l} = j\right] \cdot \frac{k(x_i, x_j)}{\widehat{p}(i, j)} \cdot \overline{\mathbf{N}}_\G^{-1/2} b_e b_e^\intercal \overline{\mathbf{N}}_\G^{-1/2}\right) \\
     & \geq \lambda_{\mathrm{min}}\left(\sum_{i = 1}^n  \sum_{\substack{j\neq i\\  e= \{v_i, v_j\} }}3 p_i(j) \cdot \frac{k(x_i, x_j)}{\widehat{p}(i, j)} \cdot \overline{\mathbf{N}}_\G^{-1/2} b_e b_e^\intercal \overline{\mathbf{N}}_\G^{-1/2}\right) \\
     & \geq \lambda_{\mathrm{min}} \left(\frac{5}{12}\cdot \overline{\mathbf{N}}_\G^{-1/2} \mathbf{N}_\G \overline{\mathbf{N}}_\G^{-1/2} \right) = \frac{5}{12},
\end{align*}
where the last inequality follows by the fact that
\[
\widehat{p}(i,j) \leq \frac{36}{5} \cdot p(i,j) \leq \frac{36}{5} \cdot \left(p_i(j) + p_j(i) \right).
\]
Similarly,
\begin{align*}
    \lambda_{\mathrm{max}} \left(\mathbb{E}\left[ \sum_{i = 1}^n \sum_{l = 1}^L X_{i, l} \right]\right) & = \lambda_{\mathrm{max}}\left(\sum_{i = 1}^n \sum_{l = 1}^L\sum_{\substack{j\neq i\\  e= \{v_i, v_j\}} } 
    \mathbb{P}\left[y_{i, l} = j \right] \cdot\frac{k(x_i, x_j)}{\widehat{p}(i, j)} \cdot \overline{\mathbf{N}}_\G^{-1/2} b_e b_e^\intercal \overline{\mathbf{N}}_\G^{-1/2}\right) \\
     & \leq \lambda_{\mathrm{max}}\left(\sum_{i = 1}^n \sum_{\substack{j\neq i\\  e= \{v_i, v_j\} }}12 \cdot p_i(j)\cdot  \frac{k(x_i, x_j)}{\widehat{p}(i, j)} \cdot \overline{\mathbf{N}}_\G^{-1/2} b_e b_e^\intercal \overline{\mathbf{N}}_\G^{-1/2}\right) \\
     & \leq \lambda_{\mathrm{max}} \left(28 \cdot \overline{\mathbf{N}}_\G^{-1/2} \mathbf{N}_\G \overline{\mathbf{N}}_\G^{-1/2} \right) = 28,
\end{align*}
where the last inequality follows by the fact that\[
\widehat{p}(i,j) \geq \frac{6}{7}\cdot p(i,j) \geq \frac{3}{7}\cdot (p_i(j) + p_j(i)).
\]
Additionally, for any $i$ and $j = y_{i, l}$ and $e=\{v_i, v_j\}$, we have that
\begin{align*}
    \| X_{i, l} \| & \leq w_\graphh(v_i, v_j) \cdot b_e^\intercal \overline{\mathbf{N}}_\G^{-1/2} \overline{\mathbf{N}}_\G^{-1/2} b_e \\
    & = \frac{k(x_i, x_j)}{\widehat{p}(i, j)} \cdot b_e^\intercal \overline{\mathbf{N}}_\G^{-1} b_e \\
    & \leq \frac{k(x_i, x_j)}{\widehat{p}(i, j)} \cdot \frac{1}{\lambda_{k+1}} \| b_e \|^2 \\
    & \leq \frac{7\cdot  \lambda_{k+1}}{3 C \log(n) \left(\frac{1}{\deg_\G(v_i)} + \frac{1}{\deg_\G(v_j)}\right)} \cdot \frac{1}{\lambda_{k+1}} \left( \frac{1}{\deg_\G(v_i)} + \frac{1}{\deg_\G(v_j)} \right) \\
    & \leq \frac{7}{3 C \log(n)}.
\end{align*}
Now, we apply the matrix Chernoff bound and have that 
\begin{align*}
    \mathbb{P}\left[ \lambda_{\mathrm{max}}\left(\sum_{i = 1}^n \sum_{l = 1}^L X_{i, l}\right) \geq 42 \right] \leq n \left( \frac{e^{1/2}}{(1+1/2)^{3/2}} \right)^{ 12 C \cdot \log(n)} = O(1/n^c)
\end{align*}
for some constant $c$.
The other side of the matrix Chernoff bound gives us that
\begin{align*}
    \mathbb{P}\left[ \lambda_{\mathrm{min}}\left(\sum_{i = 1}^n \sum_{l = 1}^L X_{i, l}\right) \leq 5/24 \right] \leq O(1/n^c).
\end{align*}
Combining these, with probability $1 - O(1/n^c)$ it holds for any non-zero $x \in \mathbb{R}^n$ in the space spanned by $f_{k+1}, \ldots, f_n$ that
\[
    \frac{x^\intercal \overline{\mathbf{N}}_\G^{-1/2} \mathbf{N}_\graphh^{'} \overline{\mathbf{N}}_\G^{-1/2}x}{x^\intercal x} \in (5/24, 42).
\]
By setting $y = \overline{\mathbf{N}}_\G^{-1/2}x$, we can rewrite this as
\[
    \frac{y^\intercal \mathbf{N}_\graphh^{'} y}{y^\intercal \overline{\mathbf{N}}_\G^{1/2} \overline{\mathbf{N}}_\G^{1/2} y} = \frac{y^\intercal \mathbf{N}_\graphh^{'} y}{y^\intercal \overline{\mathbf{N}}_\G y} = \frac{y^\intercal \mathbf{N}_\graphh^{'} y}{y^\intercal y} \frac{y^\intercal y}{y^\intercal \overline{\mathbf{N}}_\graphg y} \in (5/24, 42).
\]
Since $\mathrm{dim}(\mathrm{span}\{f_{k+1}, \ldots, f_n\}) = n - k$, we have shown that there exist $n - k$ orthogonal vectors whose Rayleigh quotient with respect to $\mathrm{N}_\graphh^{'}$ is $\Omega(\lambda_{k+1}(\mathbf{N}_\G))$.
By the Courant-Fischer Theorem, we have
$    \lambda_{k+1}(\mathbf{N}_\graphh^{'}) = \Omega(\lambda_{k+1}(\mathbf{N}_\G))$.

It only remains to show that $\lambda_{k+1}(\mathbf{N}_\graphh) = \Omega(\lambda_{k+1}(\mathbf{N}_\graphh^{'}))$, which implies that $\lambda_{k+1}(\mathbf{N}_\graphh) = \Omega(\lambda_{k+1}(\mathbf{N}_\G))$.
By the definition of $\mathbf{N}_\graphh^{'}$, we have that $\mathbf{N}_\graphh = \mathbf{D}_\graphh^{-1/2} \mathbf{D}_\G^{1/2} \mathbf{N}_\graphh^{'} \mathbf{D}_\G^{1/2} \mathbf{D}_\graphh^{-1/2}$.
Therefore, for any $x \in \mathbb{R}^n$ and $y = \mathbf{D}_\G^{1/2}\mathbf{D}_\graphh^{-1/2} x$, it holds that
\[
    \frac{x^\intercal \mathbf{N}_\graphh x}{x^\intercal x} = \frac{y^\intercal \mathbf{N}_\graphh^{'} y}{x^\intercal x} = \Omega\left(\frac{y^\intercal \mathbf{N}_\graphh^{'} y}{y^\intercal y}\right),
\]
where the final guarantee follows from the fact that the degrees in $\graphh$ are preserved up to a constant factor.
The conclusion of the theorem follows by the Courant-Fischer Theorem.

Finally, we bound the running time of Algorithm~\ref{alg:fsg} which is dominated
by the recursive calls to Algorithm~\ref{alg:fsgsample}.
We note that, although the number of nodes doubles at each level of the recursion tree (visualised in Figure~\ref{fig:algtree}), the total number of samples $S$ and data points $X$ remain constant for each level of the tree.
Then, since the running time of the \textsf{KDE} algorithm is superadditive, the total running time of the \textsf{KDE} algorithms at level $i$ of the tree is
\begin{align*}
    T_{i} & = \sum_{j = 1}^{2^i} T_{\textsf{KDE}}(|S_{i, j}|, |X_{i, j}|, \epsilon) \\
    & \leq T_{\textsf{KDE}}\left(\sum_{j = 1}^{2^i} |S_{i, j}|, \sum_{j = 1}^{2^i} |X_{i, j}|, \epsilon \right) = T_{\textsf{KDE}}(|S|, |X|, \epsilon).
\end{align*}
Since there are $O(\log(n))$ levels of the tree, the total running time of Algorithm~\ref{alg:fsgsample} is $\widetilde{O}(T_{\textsf{KDE}}(|S|, |X|, \epsilon) )$. This  completes the proof.
\end{proof}

\section{Additional Experimental Results} \label{app:experiments}
In this section, we include in Figures~\ref{fig:bsds_results_app1}~and~\ref{fig:bsdsapp2} some additional examples of the performance of the six spectral clustering algorithms on the BSDS image segmentation dataset.
Due to its quadratic memory requirement, 
the \textsc{SKLearn GK} algorithm  cannot be used on the full-resolution image. Therefore, we present its results on each image downsampled to 20,000 pixels.
For every other algorithm, we show the results on the full-resolution image.
In every case, we find that our algorithm is able to identify more refined detail of the image when compared with the alternative algorithms.

\begin{figure}[htp]
    \centering
    \subfigure[Original Image] {\includegraphics[width=0.22\columnwidth]{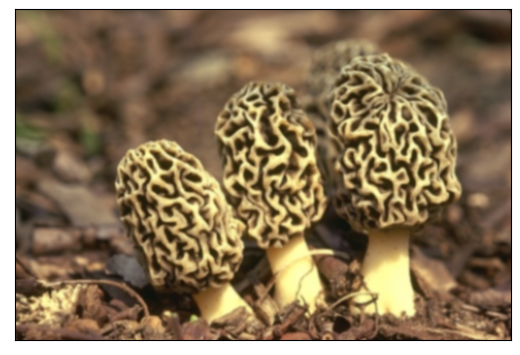}}
    \hspace{1em}
    \subfigure[\textsc{SKLearn GK}] {\includegraphics[width=0.22\columnwidth]{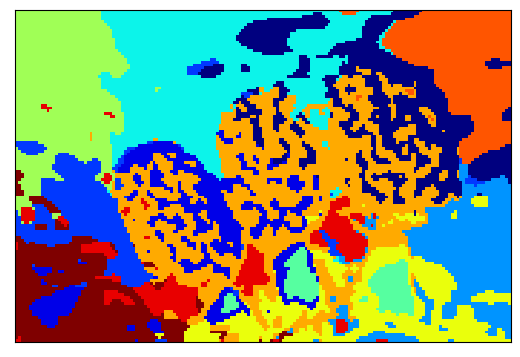}}
    \hspace{1em}
    \subfigure[\textsc{Our Algorithm}] {\includegraphics[width=0.22\columnwidth]{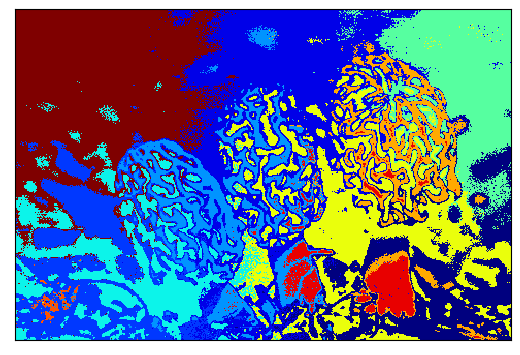}}
    \\
    \subfigure[\textsc{SKLearn $k$-NN}] {\includegraphics[width=0.22\columnwidth]{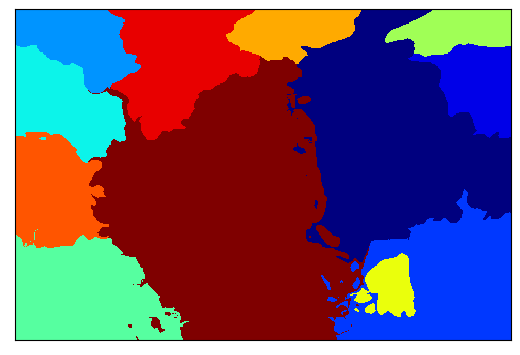}}
    \hspace{1em}
    \subfigure[\textsc{FAISS Exact}] {\includegraphics[width=0.22\textwidth]{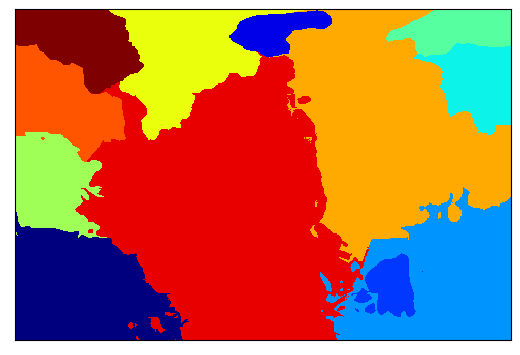}}
    \hspace{1em}
    \subfigure[\textsc{FAISS HNSW}] {\includegraphics[width=0.22\textwidth]{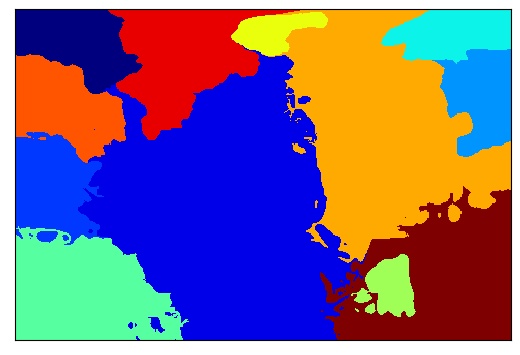}}
    \hspace{1em}
    \subfigure[\textsc{FAISS IVF}] {\includegraphics[width=0.22\textwidth]{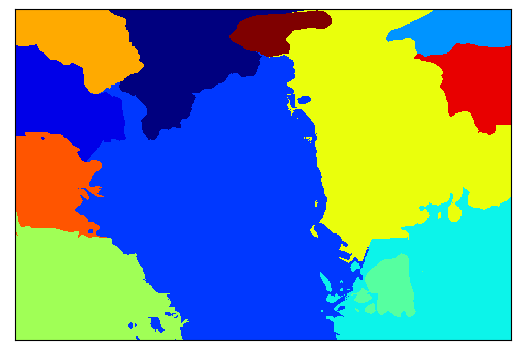}}
    \\
    \par\noindent\rule{\textwidth}{0.5pt}
    \subfigure[Original Image] {\includegraphics[width=0.22\columnwidth]{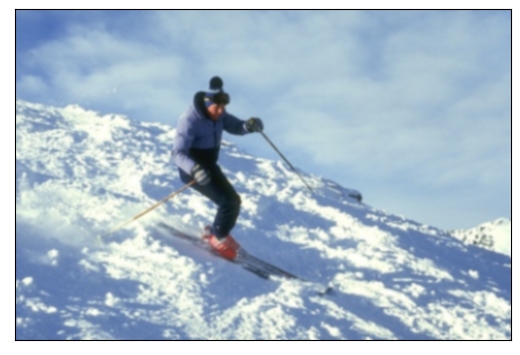}}
    \hspace{1em}
    \subfigure[\textsc{SKLearn GK}] {\includegraphics[width=0.22\columnwidth]{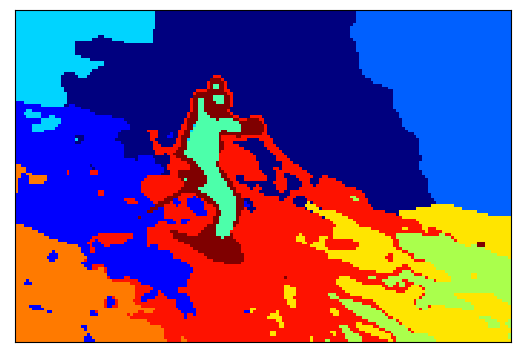}}
    \hspace{1em}
    \subfigure[\textsc{Our Algorithm}] {\includegraphics[width=0.22\columnwidth]{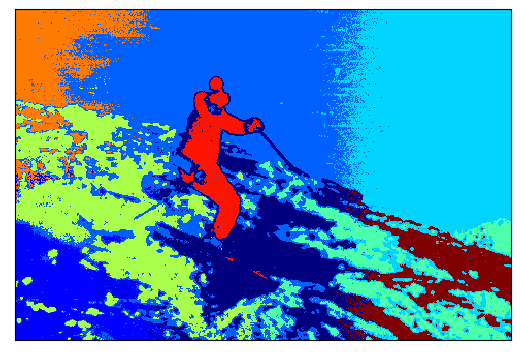}}
    \\
    \subfigure[\textsc{SKLearn $k$-NN}] {\includegraphics[width=0.22\columnwidth]{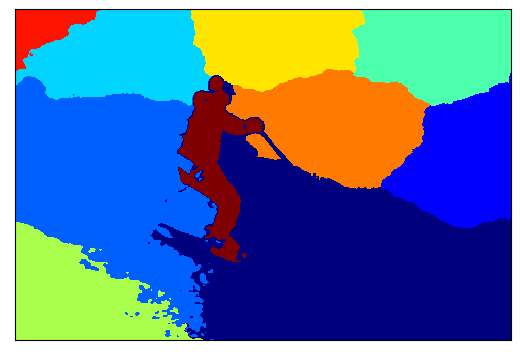}}
    \hspace{1em}
    \subfigure[\textsc{FAISS Exact}] {\includegraphics[width=0.22\textwidth]{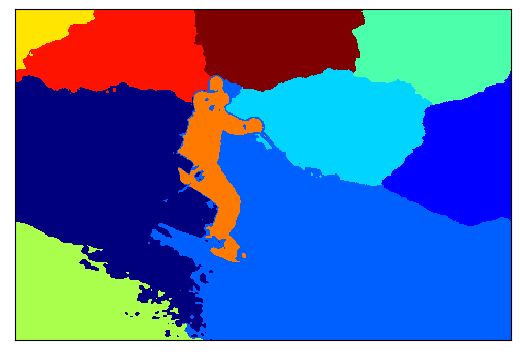}}
    \hspace{1em}
    \subfigure[\textsc{FAISS HNSW}] {\includegraphics[width=0.22\textwidth]{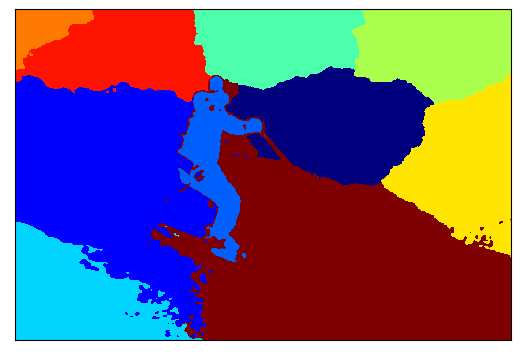}}
    \hspace{1em}
    \subfigure[\textsc{FAISS IVF}] {\includegraphics[width=0.22\textwidth]{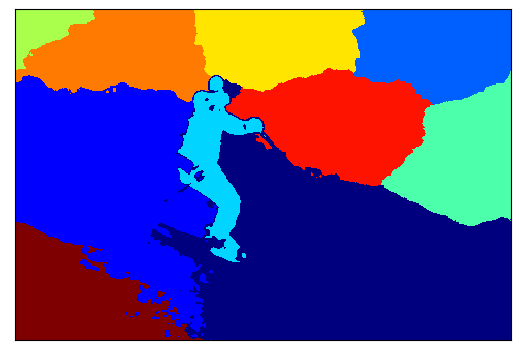}}
    \\
    \par\noindent\rule{\textwidth}{0.5pt}
    \subfigure[Original Image] {\includegraphics[width=0.22\columnwidth]{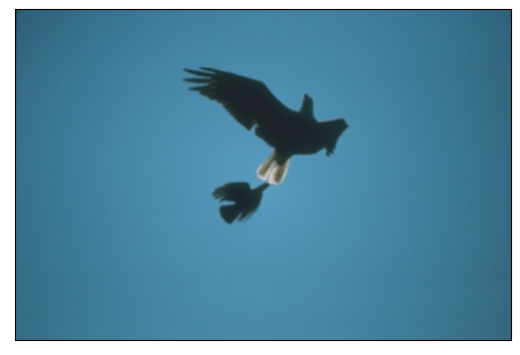}}
    \hspace{1em}
    \subfigure[\textsc{SKLearn GK}] {\includegraphics[width=0.22\columnwidth]{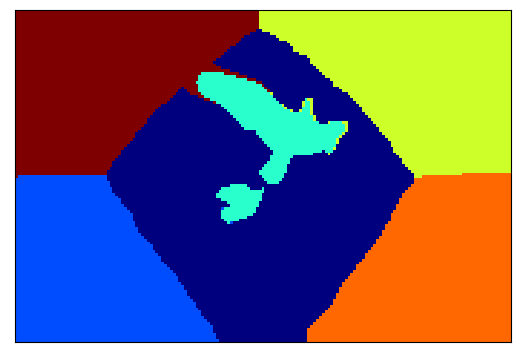}}
    \hspace{1em}
    \subfigure[\textsc{Our Algorithm}] {\includegraphics[width=0.22\columnwidth]{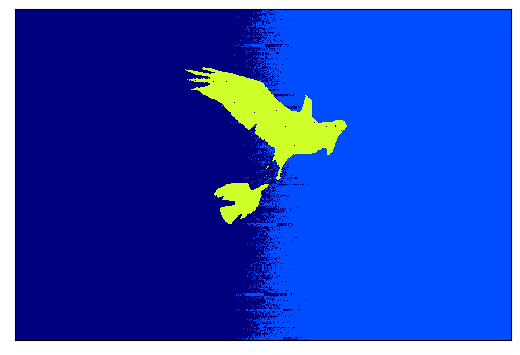}}
    \\
    \subfigure[\textsc{SKLearn $k$-NN}] {\includegraphics[width=0.22\columnwidth]{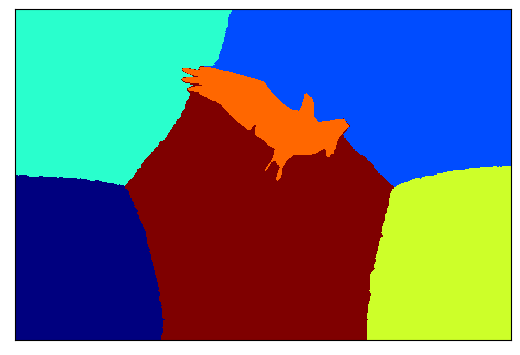}}
    \hspace{1em}
    \subfigure[\textsc{FAISS Exact}] {\includegraphics[width=0.22\textwidth]{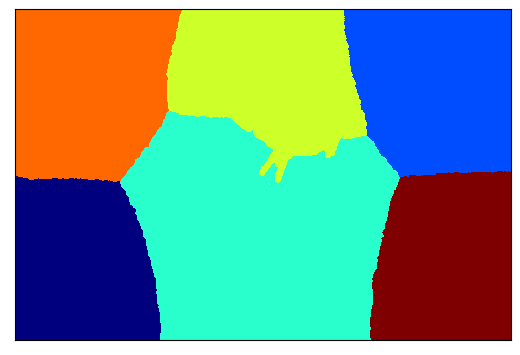}}
    \hspace{1em}
    \subfigure[\textsc{FAISS HNSW}] {\includegraphics[width=0.22\textwidth]{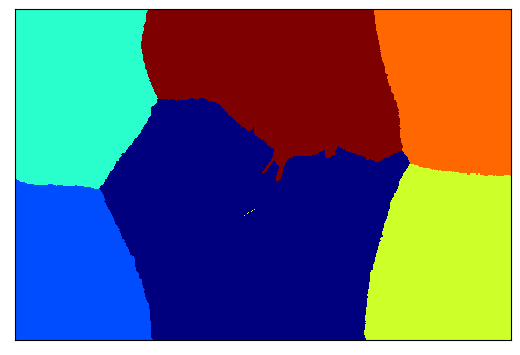}}
    \hspace{1em}
    \subfigure[\textsc{FAISS IVF}] {\includegraphics[width=0.22\textwidth]{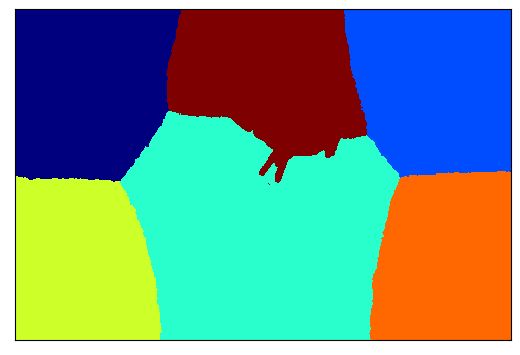}}

    % \subfigure[Original Image] {\includegraphics[width=0.2\columnwidth]{}}
    % \hspace{1em}
    % \subfigure[\texttt{SKLearn GK}] {\includegraphics[width=0.2\columnwidth]{}}
    % \hspace{1em}
    % \subfigure[SKLearn KNN SC] {\includegraphics[width=0.2\columnwidth]{}}
    % \hspace{1em}
    % \subfigure[Fast Spectral Clustering] {\includegraphics[width=0.2\columnwidth]{}}
    
    % \subfigure[Original Image] {\includegraphics[width=0.2\columnwidth]{}}
    % \hspace{1em}
    % \subfigure[SKLearn RBF SC] {\includegraphics[width=0.2\columnwidth]{}}
    % \hspace{1em}
    % \subfigure[SKLearn KNN SC] {\includegraphics[width=0.2\columnwidth]{}}
    % \hspace{1em}
    % \subfigure[Fast Spectral Clustering] {\includegraphics[width=0.2\columnwidth]{}}

    % \subfigure[Original Image] {\includegraphics[width=0.2\columnwidth]{figures/101027_orig.png}}
    % \hspace{1em}
    % \subfigure[\textsc{SKLearn GK}] {\includegraphics[width=0.2\columnwidth]{figures/101027_rbf.png}}
    % \hspace{1em}
    % \subfigure[\textsc{SKLearn $k$-NN}] {\includegraphics[width=0.2\columnwidth]{figures/101027_knn.png}}
    % \hspace{1em}
    % \subfigure[\textsc{Our Algorithm}] {\includegraphics[width=0.2\columnwidth]{figures/101027_fast.png}}
    
    \caption{More examples on the performance of the spectral clustering algorithms for image segmentation.
    \label{fig:bsds_results_app1}
    }   
\end{figure}

\begin{figure}[htp]
\centering
    \subfigure[Original Image] {\includegraphics[width=0.2\columnwidth]{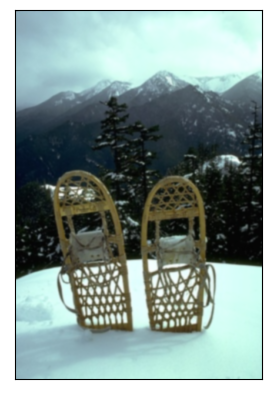}}
    \hspace{1em}
    \subfigure[\textsc{SKLearn GK}] {\includegraphics[width=0.2\columnwidth]{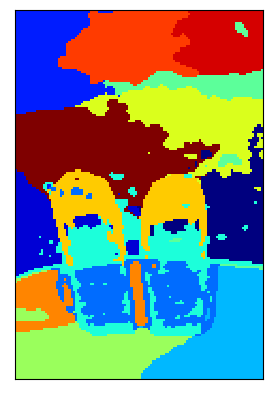}}
    \hspace{1em}
    \subfigure[\textsc{Our Algorithm}] {\includegraphics[width=0.2\columnwidth]{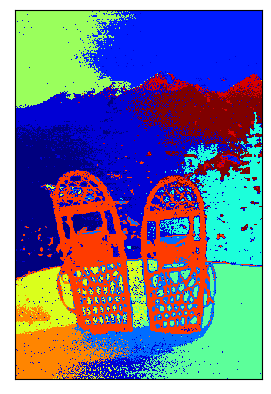}}
    \\
    \subfigure[\textsc{SKLearn $k$-NN}] {\includegraphics[width=0.2\columnwidth]{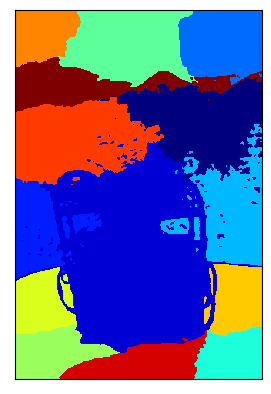}}
    \hspace{1em}
    \subfigure[\textsc{FAISS Exact}] {\includegraphics[width=0.2\textwidth]{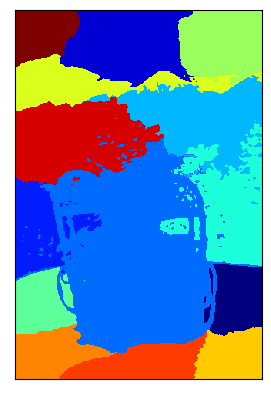}}
    \hspace{1em}
    \subfigure[\textsc{FAISS HNSW}] {\includegraphics[width=0.2\textwidth]{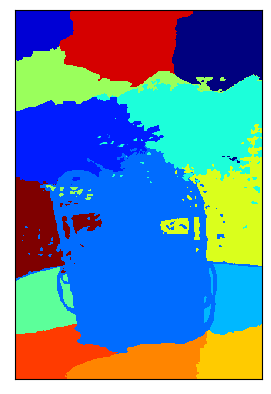}}
    \hspace{1em}
    \subfigure[\textsc{FAISS IVF}] {\includegraphics[width=0.2\textwidth]{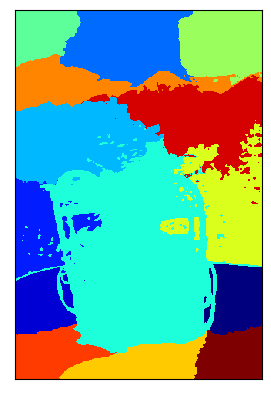}}
    \\
    \par\noindent\rule{\textwidth}{0.5pt}
    
    \subfigure[Original Image] {\includegraphics[width=0.2\columnwidth]{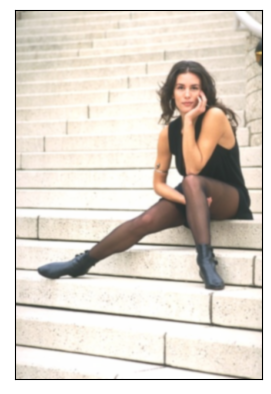}}
    \hspace{1em}
    \subfigure[\textsc{SKLearn GK}] {\includegraphics[width=0.2\columnwidth]{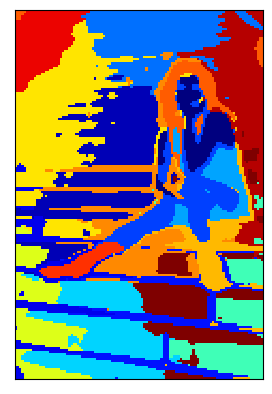}}
    \hspace{1em}
    \subfigure[\textsc{Our Algorithm}] {\includegraphics[width=0.2\columnwidth]{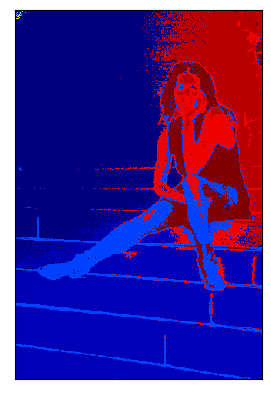}}
    \\
    \subfigure[\textsc{SKLearn $k$-NN}] {\includegraphics[width=0.2\columnwidth]{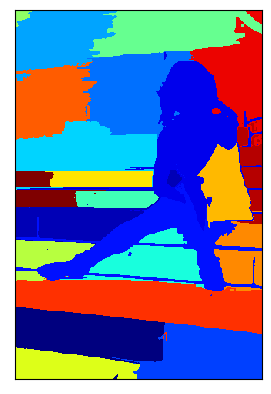}}
    \hspace{1em}
    \subfigure[\textsc{FAISS Exact}] {\includegraphics[width=0.2\textwidth]{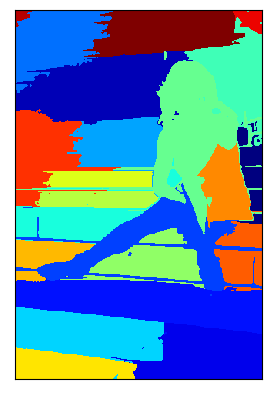}}
    \hspace{1em}
    \subfigure[\textsc{FAISS HNSW}] {\includegraphics[width=0.2\textwidth]{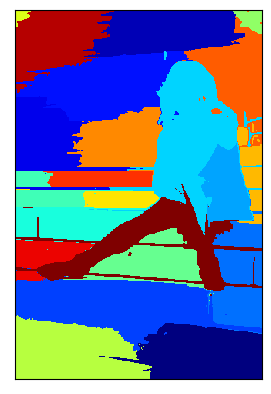}}
    \hspace{1em}
    \subfigure[\textsc{FAISS IVF}] {\includegraphics[width=0.2\textwidth]{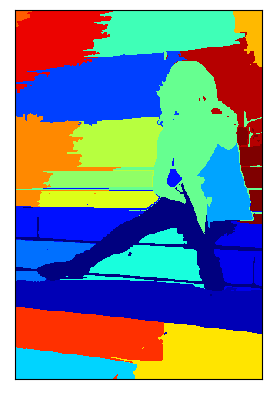}}

    \caption{More examples on the performance of the spectral clustering algorithms for image segmentation. \label{fig:bsdsapp2}}
\end{figure}

\end{document}